\documentclass[envcountsame]{llncs}
\usepackage{amsmath,amssymb}
\usepackage{hyperref}
\usepackage{cleveref}
\usepackage{enumerate}
\usepackage{versions}
\usepackage{mathtools}
\usepackage[maxcitenames=2,maxbibnames=9,backend=bibtex,doi=false,firstinits=true,isbn=false,url=false]{biblatex}
\bibliography{references}

\includeversion{SHORT}
\excludeversion{LONG}

\usepackage{mathtools} 

\newtheorem{observation}{Observation}

%

%

\newcommand{\bigo}[1]{\mathcal{O}(#1)}
\newcommand{\bigolr}[1]{\mathcal{O}\left(#1\right)}

\newcommand{\set}[1]{$\mathcal{#1}$}

\begin{document}

\hyphenation{pro-blems}
\pagestyle{headings} 
\addtocmark{Computing Coverage Kernels Under Restricted Settings}

\mainmatter 
\title{Computing Coverage Kernels  \\  Under Restricted Settings}

\author{
 J\'er\'emy Barbay \inst{1}
\thanks{%
	This work was supported by projects CONICYT Fondecyt/Regular nos 1170366 and 1160543, and CONICYT-PCHA/Doctorado Nacional/2013-63130209 (Chile).
}
\and
Pablo P\'erez-Lantero \inst{2}
 \and 
 Javiel Rojas-Ledesma\inst{1}
}

\institute{
 Departamento de Ciencias de la Computaci\'on, Universidad de Chile, Chile\\ 
 \email{jeremy@barbay.cl, jrojas@dcc.uchile.cl.}
\and
Departamento de Matem\'atica y  Ciencia de la Computaci\'on, \\ 
Universidad de Santiago, Chile.
\email{pablo.perez.l@usach.cl.}
}



\maketitle

\setcounter{footnote}{0}

\begin{abstract} 
We consider the \textsc{Minimum Coverage Kernel} problem: given a set \set{B} of $d$-dimensional boxes, find a subset of \set{B} of minimum size covering the same region as \set{B}.  This problem is $\mathsf{NP}$-hard, 
but as for many $\mathsf{NP}$-hard problems on graphs, the problem becomes solvable in polynomial time
under restrictions on the graph induced by $\mathcal{B}$. 
We consider various classes of graphs, show that \textsc{Minimum Coverage Kernel} remains $\mathsf{NP}$-hard even for severely  restricted  instances, and provide two polynomial time approximation algorithms for this problem.
\end{abstract}


\section{Introduction}\label{sec:Introduction}				 

Given a set  $P$ of $n$ points, and a set \set{B} of $m$ boxes (i.e. axis-aligned closed hyper-rectangles) in $d$-dimensional space, the \textsc{Box Cover} problem consists in finding a set $\mathcal{C \subseteq B}$  of minimum size such that \set{C} covers $P$.
A special case is the \textsc{Orthogonal Polygon Covering} problem: given an orthogonal polygon \set{P} with $n$ edges, find a set of boxes $\mathcal{C}$ of minimum size whose union covers \set{P}.
Both problems are \textsf{NP}-hard~\cite{CulbersonR94,Fowler1981}, but their known approximabilities in polynomial time are different: 
while \textsc{Box Cover} can be approximated up to a factor within $\bigo{\log \mathtt{OPT}}$, where $\mathtt{OPT}$ is the size of an optimal solution~\cite{BronnimannG95,Clarkson2007}; 
\textsc{Orthogonal Polygon Covering} can be approximated up to a factor within $\bigo{\sqrt{\log n}}$~\cite{KumarR03}.
 In an attempt to better understand what makes these problems hard, and why there is such a gap in their approximabilities, we introduce the notion of coverage kernels and study its computational complexity.
 
Given a set \set{B} of $n$ $d$-dimensional boxes,
a \emph{coverage kernel} of \set{B} is a subset $\mathcal{K} \subseteq \mathcal{B}$ covering the same region as \set{B},
and a minimum coverage kernel of \set{B} is a coverage kernel of minimum size.
%
%
%
The computation of  a minimum coverage kernel (namely, the \textsc{Minimum Coverage Kernel} problem) is intermediate between the \textsc{Orthogonal Polygon Covering} and the \textsc{Box Cover} problems.
This problem has found applications (under distinct names, and slight variations) in the compression of access control lists in networks~\cite{DalyLT16},
and in obtaining concise descriptions of structured sets in databases~\cite{LakshmananNWZJ02,PuM05}.
Since \textsc{Orthogonal Polygon Covering} is $\mathsf{NP}$-hard, the same holds for the 
\textsc{Minimum Coverage Kernel} problem.
%
%
We are interested in the exact computation and approximability of \textsc{Minimum Coverage Kernel} in various restricted settings:
\vskip -10pt
\begin{enumerate}
	\item  \textbf{Under which restrictions is the exact computation of \textsc{Minimum Coverage Kernel} still $\mathsf{NP}$-hard}? 
	\item \textbf{How precisely can one approximate a \textsc{Minimum Coverage Kernel}} in polynomial time?
\end{enumerate}

When the interactions between the boxes in a set $\mathcal{B}$ are simple (e.g., when all the boxes are disjoint), a minimum coverage kernel of $\mathcal{B}$ can be computed efficiently.
A natural way to capture the complexity of these interactions is through the intersection graph.
%
The intersection graph of $\mathcal{B}$ is the un-directed graph with a vertex for each box,
and in which two vertices are adjacent if and only the respective boxes intersect.
When the intersection graph is a tree, for instance, 
each box of $\mathcal{B}$ is either completely covered by another, or present in any coverage kernel of $\mathcal{B}$, 
and thus a minimum coverage kernel can be computed efficiently.
For problem on graphs, a common approach to understand when does an \textsc{NP}-hard problem become easy  is to study distinct restricted classes of graphs, in the hope to define some form of ``boundary classes'' of inputs separating ``easy'' from ``hard'' instances~\cite{AlekseevBKL07}.
Based on this, 
we study the hardness of the problem under restricted classes of the intersection graph of the input.



\paragraph{Our results.}
We study the \textsc{Minimum Coverage Kernel} problem under three restrictions of the intersection graph,
commonly considered for other problems~\cite{AlekseevBKL07}:
planarity of the graph, bounded clique-number, and bounded vertex-degree.
We show that the problem remains $\mathsf{NP}$-hard even when the intersection graph of the boxes has clique-number at most 4, and the maximum degree is at most 8. 
For the \textsc{Box Cover} problem we show that it remains $\mathsf{NP}$-hard even under the severely restricted setting where the intersection graph of the boxes is planar, its clique-number is at most 2 (i.e., the graph is triangle-free), the maximum degree is at most 3, and every point is contained in at most two boxes.

We complement these hardness results with two approximation algorithms for the \textsc{Minimum Coverage Kernel} problem running in polynomial time.
We describe a $\bigo{\log n}$-approximation algorithm which runs in time within $\bigo{\mathtt{OPT} \cdot n^{\frac{d}{2} + 1}\log^2 n}$;
and a randomized algorithm computing a $\bigo{\log \mathtt{OPT}}$-approximation in expected time within $\bigo{\mathtt{OPT} \cdot n^{\frac{d+1}{2}} \log^2 n}$, with high probability (at least $1- \frac{1}{n^{\Omega(1)}}$).
Our main contribution in this matter is not the existence of polynomial time approximation algorithms (which can be inferred from results on \textsc{Box Cover}), but a new data structure which allows to significantly improve the running time of finding those approximations (when compared to the approximation algorithms for \textsc{Box Cover}).
This is relevant in applications where a minimum coverage kernel needs to be computed repeatedly~\cite{Agarwal2014,DalyLT16,LakshmananNWZJ02,PuM05}.

In the next section we review the reductions between the three problems we consider, and introduce some basic concepts.
We then present the hardness results in \Cref{sec:covkernels_hardness}, and describe in \Cref{sec:covkernels_approximation} the two approximation algorithms.
We conclude in \Cref{sec:cover_discussion} with a discussion on the results and future work.

\section{Preliminaries}~\label{sec:background}
\begin{figure}[t]
	\begin{center}
		\includegraphics[page=2,scale=1]{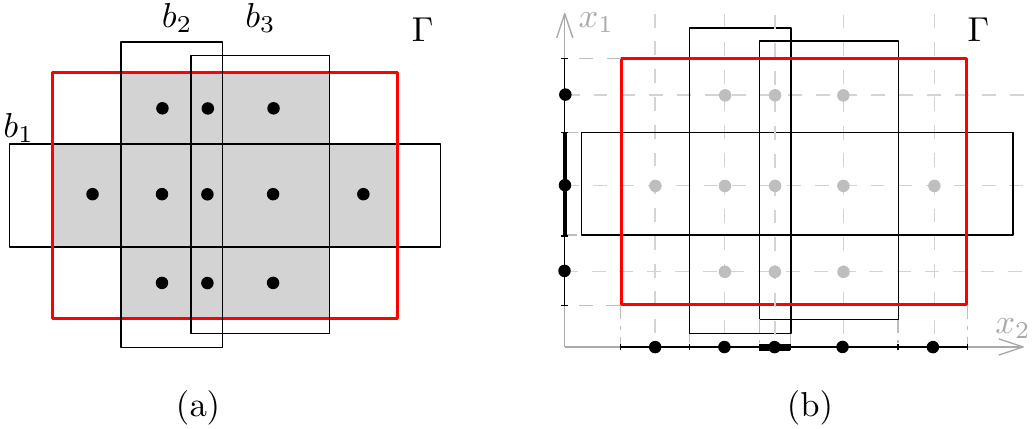}
	\end{center}   
	\caption{
		 a) An orthogonal polygon $\mathcal{P}$. 
		 b) A set of boxes $\mathcal{B} = \{ b_1, b_2, b_3, b_4\}$ covering exactly $\mathcal{P}$, and such that in any cover of $\mathcal{P}$ with boxes, every box is either in $\mathcal{B}$, or fully covered by a box in $\mathcal{B}$.
		 c) A set of points $\mathcal{D(B)} = \{p_1, p_2, p_3, p_4, p_5\}$ such that any subset of $\mathcal{B}$ covering $\mathcal{D(B)}$, covers also $\mathcal{P}$.
		 d) The subset $\{b_1, b_2, b_4\}$ is an optimal solution for the \textsc{Orthogonal Polygon Cover} problem on $\mathcal{P}$, the \textsc{Minimum Coverage Kernel} problem on $\mathcal{B}$, and the \textsc{Box Cover} problem on $\mathcal{D(B)}, \mathcal{B}$.
	}\label{fig:intro}
\end{figure}
To better understand the relation between the \textsc{Orthogonal Polygon Covering}, the \textsc{Box Cover} and the \textsc{Minimum Coverage Kernel} problems,  we briefly review the reductions between them.
We describe them in the Cartesian plane, as the generalization to higher dimensions is straightforward.

Let $\mathcal{P}$ be an orthogonal polygon with $n$ horizontal/vertical edges.
Consider the grid formed by drawing infinitely long lines through each edge of $\mathcal{P}$ (see \Cref{fig:intro}.a for an illustration),
%
and let $G$ be the set of $\bigo{n^2}$ points of this grid lying on the intersection of two lines.
Create a set $\mathcal{B}$ of boxes as follows:
for each pair of points in $G$, if the box having those two points as opposed vertices is completely inside $\mathcal{P}$, then add it to $\mathcal{B}$ (see \Cref{fig:intro}.b.) 
Let $\mathcal{C}$ be any set of boxes covering $\mathcal{P}$. 
Note that for any box $c \in \mathcal{C}$, either the vertices of $c$ are in $G$, or $c$ can be extended horizontally and/or vertically (keeping $c$ inside $\mathcal{P}$) until this property is met.
Hence, 
there is at least one box in $\mathcal{B}$ that covers each $c \in \mathcal{C}$, respectively, 
and thus there is a subset $\mathcal{B}' \subseteq \mathcal{B}$ covering $\mathcal{P}$ with $|\mathcal{B}'| \le |C|$.
Therefore, any minimum coverage kernel of $\mathcal{B}$ is also an optimal covering of $\mathcal{P}$
(and thus, transferring the \textsf{NP}-hardness of the \textsc{Orthogonal Polygon Covering} problem~\cite{CulbersonR94} to the \textsc{Minimum Coverage Kernel} problem).

Now, let $\mathcal{B}$ be a set of $n$ boxes, and consider the grid formed by drawing infinite lines through the edges of each box in $\mathcal{B}$.
This grid has within $\bigo{n^2}$ cells ($\bigo{n^{d}}$ when generalized to $d$ dimensions).
Create a point-set $\mathcal{D(B)}$ as follows: for each cell $c$ which is completely inside a box in $\mathcal{B}$ we add to  $\mathcal{D(B)}$ the middle point of $c$  (see \Cref{fig:intro}.c for an illustration). 
We call such a point-set a \emph{coverage discretization} of $\mathcal{B}$, and denote it as $\mathcal{D(B)}$.
Note that a set $\mathcal{C} \subseteq \mathcal{B}$ covers $\mathcal{D(B)}$ if and only if $\mathcal{C}$ covers the same region as $\mathcal{B}$ (namely, $\mathcal{C}$ is a coverage kernel of $\mathcal{B}$). 
Therefore,  the \textsc{Minimum Coverage Kernel} problem is a special case of the \textsc{Box Cover} problem.

The relation between the \textsc{Box Cover}  and the \textsc{Minimum Coverage Kernel} problems has two main implications.
Firstly, hardness results for the \textsc{Minimum Coverage Kernel} problem can be transferred to the \textsc{Box Cover} problem. 
In fact, we do this in \Cref{sec:covkernels_hardness}, where we show that 
\textsc{Minimum Coverage Kernel} remains \textsf{NP}-hard under severely restricted settings, and extend this result to the \textsc{Box Cover} problem under even more restricted settings.
The other main implication is that polynomial-time approximation algorithms for the \textsc{Box Cover} problem can also be used for \textsc{Minimum Coverage Kernel}.
However, in scenarios where the boxes in $\mathcal{B}$ represent high dimensional data~\cite{DalyLT16,LakshmananNWZJ02,PuM05} and \textsc{Coverage Kernels} need to be computed repeatedly~\cite{Agarwal2014}, using approximation algorithms for \textsc{Box Cover} can be unpractical.
This is because constructing $\mathcal{D(B)}$ requires time and space within $\Theta({n^{d}})$.
We deal with this in \Cref{sec:covkernels_approximation}, where we introduce a data structure to index $\mathcal{D(B)}$ without constructing it explicitly.
Then, we show how to improve two existing approximation algorithms~\cite{BronnimannG95,Lovasz75}  for the \textsc{Box Cover} problem by using this index, making possible to use them for the \textsc{Minimum Coverage Kernel} problem in the scenarios commented on.


\section{Hardness under Restricted Settings}~\label{sec:covkernels_hardness}
We prove that \textsc{Minimum Coverage Kernel} remains $\mathsf{NP}$-hard for restricted classes of the intersection graph of the input set of boxes.
We consider three main restrictions: when the graph is planar, when the size of its largest clique (namely the clique-width of the graph) is bounded by a constant, and when the degree of a vertex with maximum degree (namely the vertex-degree of the graph) is bounded by a constant.

\subsection{Hardness of Minimum Coverage Kernel}
Consider the \textsc{$k$-Coverage Kernel} problem: given a set \set{B} of $n$ boxes, find whether there are $k$ boxes in \set{B} covering the same region as the entire set. 
Proving that \textsc{$k$-Coverage Kernel} is $\mathsf{NP}$-complete under restricted settings yields the $\mathsf{NP}$-hardness of \textsc{Minimum Coverage Kernel} under the same conditions.
To prove that \textsc{$k$-Coverage Kernel} is $\mathsf{NP}$-hard under restricted settings we reduce instances of the \textsc{Planar 3-SAT} problem (a classical $\mathsf{NP}$-complete problem~\cite{MulzerR08})
to restricted instances of \textsc{$k$-Coverage Kernel}. 
In the \textsc{Planar 3-SAT}  problem, given a boolean formula in 3-CNF whose incidence graph%
\footnote{
	The \emph{incidence graph} of a 3-SAT formula is a bipartite graph with a vertex for each variable and each clause, and an edge between a variable vertex and a clause vertex for each occurrence of a variable in a clause.
}
is planar, the goal is to find whether there is an assignment which satisfies the formula.
The (planar) incidence graph of any planar 3-SAT formula $\varphi$ can be represented in the plane as illustrated in \Cref{fig:planar3SAT-sample} for an example, where all variables lie on a horizontal line, and all clauses are represented by {\em non-intersecting} three-legged combs~\cite{KnuthR92}. We refer to such a representation of $\varphi$ as the \emph{planar embedding} of $\varphi$. 
\begin{figure}[h]
	\centering	
	\begin{minipage}{.9\textwidth}
		\centering	
		\includegraphics[scale=1]{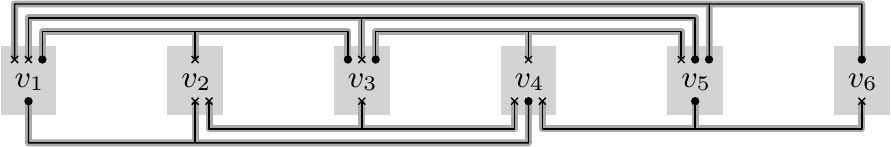}		
		\caption{\small{Planar embedding of the formula $\varphi=(v_1 \lor \overline{v_2} \lor v_3) 
				\land (v_3 \lor \overline{v_4} \lor \overline{v_5})\land(\overline{v_1} \lor \overline{v_3} \lor v_5)
				\land(v_1 \lor \overline{v_2} \lor v_4)\land(\overline{v_2} \lor \overline{v_3} \lor \overline{v_4})
				\land(\overline{v_4} \lor v_5 \lor \overline{v_6})\land(\overline{v_1} \lor v_5 \lor v_6)$.
				The crosses and dots at the end of the clause legs indicate that the connected variable appears in the clause negated or not, respectively.}}
		\label{fig:planar3SAT-sample}
	\end{minipage}
\end{figure}
Based on this planar embedding we proof the results in \Cref{theo:coverage_hardness}.
Although our arguments are described in two dimensions, they extend trivially to higher dimensions.

\begin{theorem}\label{theo:coverage_hardness}
	Let $\mathcal{B}$ be a set of $n$ boxes in the plane and let $G$ be the intersection graph of $\mathcal{B}$.
	Solving \textsc{$k$-Coverage Kernel} over $\mathcal{B}$ is \emph{\textsf{NP}}-complete even 
	if $G$ has clique-number at most 4, and vertex-degree at most 8.
\end{theorem}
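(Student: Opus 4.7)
The plan is to establish NP-hardness by reduction from \textsc{Planar 3-SAT}, following the standard gadget recipe adapted to axis-aligned boxes. Membership in $\mathsf{NP}$ is immediate: given a candidate subset of $k$ boxes, compute a coverage discretization $\mathcal{D}(\mathcal{B})$ of the input (as described in Section~\ref{sec:background}) and check that every point of $\mathcal{D}(\mathcal{B})$ is covered by the candidate, which takes polynomial time.

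For the hardness direction I would, given a \textsc{Planar 3-SAT} formula $\varphi$ together with its planar embedding (variables on a horizontal line, clauses as non-crossing three-legged combs, as in Figure~\ref{fig:planar3SAT-sample}), build a set $\mathcal{B}$ of boxes and an integer $k$ such that $\varphi$ is satisfiable iff $\mathcal{B}$ admits a coverage kernel of size at most $k$. The construction uses two gadgets, placed directly on top of the embedding. \emph{Variable gadget:} for each variable $v_i$ occurring $\ell_i$ times, lay out a cyclic chain of $2\ell_i$ small boxes labelled alternately $T_1^i, F_1^i, \dots, T_{\ell_i}^i, F_{\ell_i}^i$, with each box overlapping only its two neighbours in the cycle; for every consecutive pair introduce a tiny ``test'' region covered by exactly those two boxes. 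Covering all $2\ell_i$ test regions demands at least $\ell_i$ boxes, and the only two ways to do it with exactly $\ell_i$ boxes are to take all $T$'s or all $F$'s --- encoding the assignment $v_i = \mathrm{true}$ or $v_i = \mathrm{false}$. \emph{Clause gadget:} for each clause $C = \ell^1 \lor \ell^2 \lor \ell^3$, place a ``clause point'' $p_C$ at the tip of its comb, and three small boxes $c_C^1, c_C^2, c_C^3$ through $p_C$, each one positioned \emph{inside} the variable-gadget box that corresponds to the satisfying state of its literal. Thus $p_C$ is covered either by selecting some $c_C^j$, or by selecting a variable-gadget box witnessing a true literal.

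Setting $k = \sum_i \ell_i$ consumes the whole budget on variable gadgets, forcing each $p_C$ to be covered by a variable-gadget box, which in turn corresponds to a satisfying assignment of $\varphi$; conversely, any satisfying assignment yields a coverage kernel of exactly this size by selecting the corresponding $T$- or $F$-boxes in every variable cycle. The planarity of the embedding guarantees that the comb legs can be realised as non-crossing box chains, so distinct gadgets do not interfere. The graph restrictions are then verified locally: a variable-gadget box meets its two cycle-neighbours plus at most a constant number of clause boxes piercing it, while each $c_C^j$ meets only its two siblings and the enclosing variable box. Cliques arise only where the three clause boxes meet the enclosing variable box at $p_C$, giving clique-number at most $4$; a careful count of possible intersections at the ``turns'' of the cycle and at the insertion points of the clause legs bounds the maximum degree by $8$.

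The main obstacle will be implementing the variable cycle with axis-aligned boxes so that \emph{only} the two alternating selections minimise the number of boxes needed to cover the test regions. In particular, the ``corners'' of the cycle (where the chain turns to remain inside the horizontal variable strip) must be realised without introducing spurious minimum covers, and without creating points covered by five or more boxes, which would violate the clique-number bound. A secondary obstacle is routing the clause combs so that the boxes inserted near each $p_C$ do not raise the degree of the surrounding variable box beyond~$8$; this is handled by spacing consecutive clause connections along the cycle so that each variable box hosts at most one clause connection, which is always possible since there are $2\ell_i$ boxes available for $\ell_i$ literal occurrences.
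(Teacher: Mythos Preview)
Your reduction is set up for \textsc{Box Cover} (cover a finite set of test points), not for \textsc{Coverage Kernel} (cover the \emph{entire} union). In your variable cycle each box overlaps only its two neighbours, so the central portion of every $T_j^i$ and every $F_j^i$ has depth~$1$: it is covered by that box and by nothing else in $\mathcal{B}$. Hence \emph{every} box of every variable gadget is forced into every coverage kernel, and the alternating ``all $T$'s / all $F$'s'' choice you rely on never materialises. The same problem hits the clause boxes $c_C^j$: whatever part of a leg lies outside all variable boxes has depth~$1$ and is forced. The paper's proof handles exactly this issue by introducing an extra layer of ``green'' rectangles that blanket every depth-$1$ region; these green boxes are forced into every kernel, but once they are present the red/blue (and the black clause) rectangles become genuinely optional, and the counting argument goes through. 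Without an analogous mechanism your budget $k=\sum_i\ell_i$ is simply infeasible.

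There is also a geometric inconsistency in the clause gadget as you describe it. You place a single point $p_C$ at the tip of the comb and three boxes $c_C^1,c_C^2,c_C^3$ each ``through $p_C$'' and each ``inside the variable-gadget box'' of its literal. But the three variable gadgets sit at pairwise disjoint horizontal positions, so no single point can lie in boxes contained in all three of them; consequently $p_C$ cannot be covered by a variable box for an arbitrary satisfied literal as you claim. The paper avoids this by routing three disjoint legs down to the variables and putting the satisfiability test at the \emph{feet} (one connection region per literal), with a separate $9$-rectangle comb whose minimum cover jumps from $5$ to $6$ precisely when all three legs are forced. That comb, together with the green rectangles, is what makes the counting tight and what produces the clique-number~$4$ / degree~$8$ bounds you are asked to verify.
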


\begin{proof}
	Given any set \set{B} of $n$ boxes in $\mathbb{R}^d$, and any subset  \set{K} of \set{B}, certifying that 
	\set{K} covers the same region as \set{B} can be done in  time within $\bigo{n^{d/2}}$ using \citeauthor{Chan2013}'s algorithm~\cite{Chan2013} for computing the volume of the union of the boxes in $\mathcal{B}$. Therefore, \textsc{$k$-Coverage Kernel} is in $\mathsf{NP}$.
	To prove that it is \textsf{NP}-complete, given a planar 3-SAT formula $\varphi$ with $n$ variables and $m$ clauses,  we construct a set \set{B} of $\bigo{n + m}$ boxes with a coverage kernel of size $31m + 3n$  if and only if there is an assignment of the variables satisfying $\varphi$.
	We use the planar embedding of $\varphi$ as a start point, and replace the components corresponding to variables and clauses, respectively, by gadgets composed of several boxes. 
	We show that this construction can be obtained in polynomial time, and thus any polynomial time solution to $k$-\textsc{Coverage Kernel} yields a polynomial time solution for \textsc{Planar  3-SAT}.
	We replace the components in that embedding corresponding to variables and clauses, respectively, by gadgets composed of several boxes, adding a total number of boxes polynomial in the number of variable and clauses.
	
	\begin{figure}[h]
		\centering	
		\begin{minipage}{.9\textwidth}
			\centering	
			\hspace*{-.2in}
			\includegraphics[scale=.8,page=8]{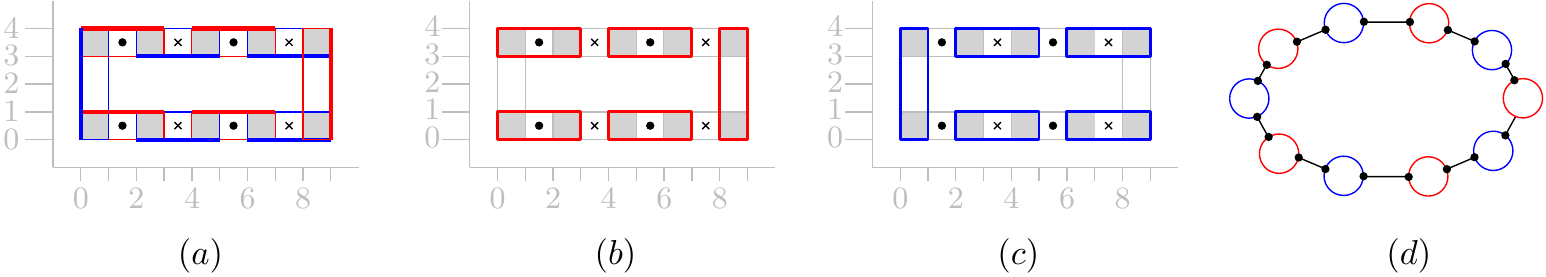}		
			\caption{Variable and clause gadgets for $\varphi = (\overline{v_1} \lor v_2 \lor v_3) \land (v_1 \lor \overline{v_2} \lor v_4) \land (v_1 \lor \overline{v_3} \lor v_4)$.
				The bold lines highlight one side of each rectangle in the instance, while the dashed lines delimit the regions of the variable and clause components in the planar embedding of $\varphi$.
				Finding a minimum subset of rectangles covering the non-white regions yields an answer for the satisfiability of $\varphi$.}
			\label{fig:clausesample-full}
		\end{minipage}
	\end{figure}

	\begin{figure}[h]
		\centering	
		\begin{minipage}{.9\textwidth}
			\centering	
			\hspace*{-.3in}
			\includegraphics[scale=.8,page=1]{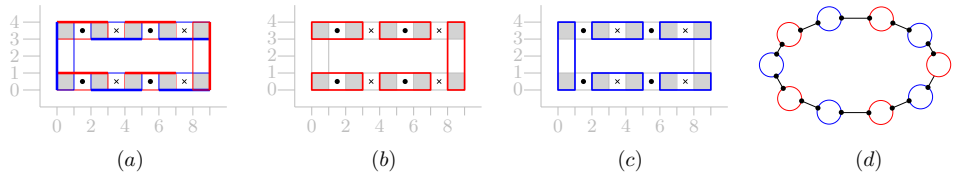}		
			\caption{
				$a)$ Gadget for a variable appearing in two clauses.
				A clause gadget connects with the variable in the regions marked with a dot or a cross, depending on the sign of the variable in the clause.
				The optimal way to cover all the redundant regions (the gray regions) is to choose either all the red rectangles (as in $b$) or all the blue rectangles (as in $c$). $d)$ The intersection graph of the variable gadget.
			}
			\label{fig:variablegadget}
		\end{minipage}
	\end{figure}
	
	\paragraph*{Variable gadgets.}
	Let $v$ be a variable of a planar 3-SAT formula $\varphi$, and let $c_v$ be the number of clauses of $\varphi$ in which $v$ appears.
	The gadget for $v$  is composed of $4c_v + 2$ rectangles colored either red or blue (see \Cref{fig:variablegadget} for an illustration): $4 c_v$ \emph{horizontal}  rectangles (of $3 \times 1$  units of size), separated into two ``rows'' with $2c_v$ rectangles each,  and two \emph{vertical} rectangles (of $1 \times 4$  units of size) connecting the rows. 
	The rectangles in each row are enumerated from left to right, starting by one. The $i$-th rectangle of the $j$-th row is defined by the product of intervals $[4(i-1), 4i-1] \times [4(j-1), 4(j-1)+1]$, for all $i=[1..2c_v]$ and $j=1,2$.
	The gadget occupies a rectangular region of $(4c_v + 1) \times 4$ units. 
	Although the gadget is defined with respect to the origin of coordinates, it is later translated to the region corresponding to $v$ in the embedding of \textcite{KnuthR92}, which we assume without loss of generality to be large enough to fit the gadget.
	Every horizontal rectangle is colored red if its numbering is odd, and blue otherwise.
	Besides, the vertical leftmost (resp. rightmost) rectangle is colored blue (resp. red). As we will see later, these colors are useful when connecting a clause gadget with its variables.
	
	Observe that:
	($i$.) every red (resp. blue) rectangle intersects exactly two others,  both blue (resp. red), sharing with each a squared region of $1 \times 1$ units (which we call \emph{redundant regions});
	($ii$.) the optimal way to cover the redundant regions is by choosing either all the $2c_v + 1$ red rectangles or all the $2c_v + 1$ blue rectangles (see \Cref{fig:variablegadget} for an example).
	
	\paragraph*{Clause gadgets.}
	Let $C$ be a clause with variables $u$, $v$, and $w$, appearing in this order from left to right in the embedding of $\varphi$.
	Assume, without loss of generality, that the component for $C$ in the embedding is above the variables. 
	We create a gadget for $C$ composed of 9 black rectangles, located and enumerated as in \Cref{fig:clausegadget}.$a$. The vertical rectangles numbered 1, 2 and 3 correspond to the legs of $C$ in the embedding, and connect with the gadgets of $u, v$, and $w$, respectively.
	The remaining six horizontal rectangles connect the three legs between them. The vertical rectangles have one unit of width and their height is given by the height of the respective legs in the embedding of $C$. Similarly, the horizontal rectangles have one unit of height and their width is given by the separation between the legs in the embedding of $C$ (see 	\Cref{fig:clausesample-full} for an example of how these rectangles are extended or stretched as needed).
	Note that:
	($i$.) every rectangle in the gadget intersects exactly two others (again, we call \emph{redundant regions} the regions where they meet); 
	($ii$.)  any minimum cover of the redundant regions (edges in \Cref{fig:clausegadget}.$b$) has five rectangles, one of which must be a leg; and 
	($iii.$)  any cover of the redundant regions which includes the three legs must have at least six rectangles (e.g., see \Cref{fig:clausegadget}.$c$). 
	
	\begin{figure}[t]
		\centering	
		\begin{minipage}{.9\textwidth}
			\centering	
			\hspace*{-.1in}
			\includegraphics[scale=.7,page=7]{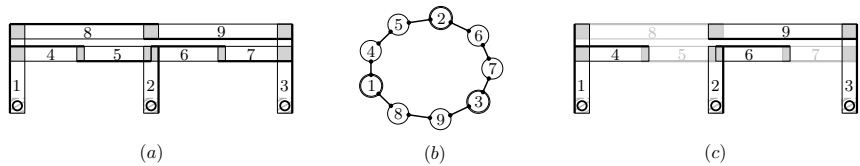}		
			\caption{
				$a)$ A clause gadget with nine black rectangles: three vertical legs (1,2, and 3) and six horizontal rectangles (4-9).
				We call the regions where they meet \emph{redundant regions}. 
				The striped regions at the bottom of each leg connect with the variables.
				$b)$ The intersection graph of the gadget. Any minimum cover of the edges (redundant regions) requires 5 vertices (rectangles).
				$c)$ Any cover of the redundant regions which includes the three legs has 6 or more rectangles.
			}
			\label{fig:clausegadget}
		\end{minipage}
	\end{figure}
	
	\paragraph{Connecting the gadgets.} Let $v$ be a variable of a formula $\varphi$ and $c_v$ be the number of clauses in which $v$ occurs. 
	The legs of the $c_v$ clause gadgets are connected with the gadget for $v$, from left to right, in the same order they appear in the  embedding of $\varphi$.
	Let $C$ be the gadget for a clause containing $v$ whose component in the embedding of $\varphi$ is above (resp. below) that for $v$.
	$C$ connects with the gadget for $v$ in one of  the rectangles in the  upper (resp. lower) row, sharing a region of $1 \times 1$  units with one of the red (resp. blue) rectangles if the variable appears positive (resp. negative) in the clause  (see \Cref{fig:clausevarsample}.$a$).
	We call this region where the variable and clause gadgets meet as \emph{connection region}, and its color is given by the color of the respective rectangle in the variable gadget.
	Note that a variable gadget has enough connection regions for all the $c_v$ clauses in which it appears, because each row of the gadget has $c_v$ rectangles of each color.
	
	\begin{figure}[h]
		\centering	
		\begin{minipage}{.9\textwidth}
			\centering	
			\hspace*{-.25in}
			\includegraphics[scale=.75,page=5]{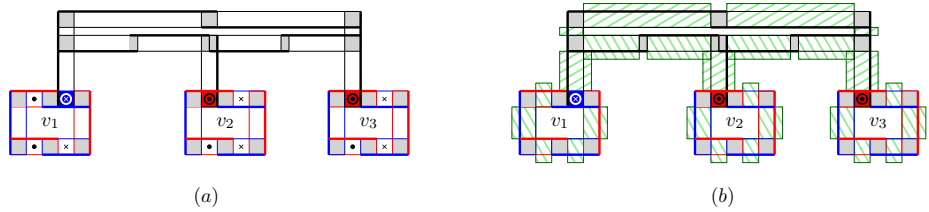}		
			\caption{An instance for $(\overline{v_1} \lor v_2 \lor v_3)$.
				$a)$ The clause gadget connects with a variable gadget in one of its red or blue connection regions depending on the sign of the variable in the clause.
				$b)$ To complete the instance, green boxes are added to cover all the regions with depth 1. The green rectangles are forced to be in any kernel by making them large enough so that each covers an exclusive region.}
			\label{fig:clausevarsample}
		\end{minipage}
	\end{figure}
	
	\paragraph{Completing the instance.}
	Each rectangle in a variable or clause gadget, as described, has a region that no other rectangle covers (i.e., of depth 1).
	Thus, the coverage kernel of the instances described up to here is trivial: all the rectangles.
	To avoid this, we cover all the regions of depth 1 with \emph{green} rectangles (as illustrated in \Cref{fig:clausevarsample}.$b$) which are forced to be in any coverage kernel%
	\footnote{
		For simplicity, these green rectangles were omitted in \Cref{fig:clausesample-full} and \Cref{fig:clauseevaluation}.
	}. 
	For every clause gadget we add $11$ such green rectangles, and for each variable gadget for a variable $v$ occurring in $c_v$ clauses we add $3c_v + 2$ green rectangles. 
	
	Let $\varphi$ be a formula with $n$ variables and $m$ clauses.
	The instance of $k$-\textsc{Coverage Kernel} that we create for $\varphi$ has a total of $41m + 4n$ rectangles:
	($i.$) each clause gadget has 9 rectangles for the comb, and 11 green rectangles, for a total of $20m$ rectangles over all the clauses;
	($ii.$) a gadget for a variable $v$ has $4c_v + 2$ red and blue rectangles, and we add a green rectangle for each of those that does not connect to a clause gadget ($3c_v + 2$ per variable), thus adding a total of $7c_v + 4$  rectangles by gadget; and ($iii.$) over all variables, we add a total of $\sum_{i=1}^{n}{7c_{v_i} + 4} = 7(3m) + 4n = 21m+4n$ rectangles%
	\footnote{
		Note that $\sum_{i=1}^{n}{c_{v_i}} = 3m$ since exactly 3 variables occurs in each clause.
	}.

	\paragraph{Intuition: from minimum kernels to boolean values.}
	Consider a gadget for a variable $v$. Any minimum coverage kernel of the gadget is composed of all its green rectangles together with either all its blue or all its red rectangles. Thus the minimum number of rectangles needed to cover all the variable gadgets is fixed, and known. If all the red rectangles are present in the kernel, we consider that $v=1$, otherwise if all the blue rectangles are present, we consider that $v=0$ (see \Cref{fig:clauseevaluation} for an example).
	\begin{figure}[h]
		\centering	
		\begin{minipage}{.9\textwidth}
			\centering	
			\hspace*{-.25in}
			\includegraphics[scale=.75,page=12]{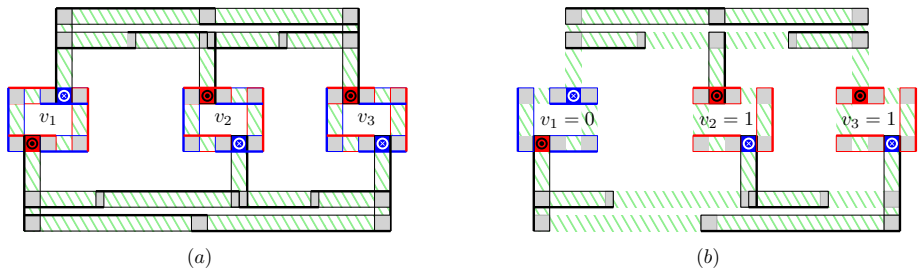}		
			\caption{$a)$ An instance for $\varphi = (\overline{v_1} \lor v_2 \lor v_3) \land (v_1 \lor \overline{v_2} \lor \overline{v_3})$ (the area covered by green rectangles is highlighted with green falling lines).
				$b)$ A cover of the gadgets corresponding to the assignment $v_1=0, v_2=1, v_3=1$ that does not satisfy $\varphi$.}
			\label{fig:clauseevaluation}
		\end{minipage}
	\end{figure}
	In the same way that choosing a value for $v$ may affect the output of a clause $C$ in which $v$ occurs, choosing a color to cover the gadget for $v$ may affect the number of rectangles required to cover the gadget for $C$.
	For instance, consider that the gadget for $v$ is covered with blue rectangles (i.e., $v=0$), and that $v$ occurs unnegated in $C$ (see the gadgets for $v_1$ and the second clause in \Cref{fig:clauseevaluation}).
	The respective leg of the gadget for $C$ meets $v$ in one of its red rectangles.
	Since red rectangles were not selected to cover the variable gadget, that leg is forced to cover the connection region shared with the variable gadget, and thus is forced to be in any kernel of the gadget for $C$.
	This corresponds to the fact that the literal of $v$ in $C$ evaluates to 0.
	If the same happens for the three variables in $C$ (i.e., $C$ is not satisfied by the assignment), then to cover its gadget  at least six of the black rectangles will be required (see the gadget for the second clause in \Cref{fig:clauseevaluation}).
	However, if at least one of its legs can be disposed of (i.e., at least one of the literals evaluates to 1), then the clause gadget can be covered with five of its black rectangles (see the gadget for the first clause in \Cref{fig:clauseevaluation}).
	The minimum number of rectangles needed to cover all the variable gadgets is fixed and known: all the green rectangles, and one half of the red/blue rectangles of each variable. 
	Therefore, it suffices to show that there is an assignment satisfying a 3-SAT formula if and only if every clause gadget can be covered by five of its black rectangles (plus all its green rectangles).
	
	\paragraph{Reduction.} 
	We prove the theorem in two steps. First, we show that such an instance has a coverage kernel of size $31m + 3n$ if and only $\varphi$ is satisfiable. Therefore, answering \textsc{$k$-Coverage-Kernel} over this instance with $k=31m+3n$ yields an answer for \textsc{Planar 3-SAT} on $\varphi$. Finally, we will show that the instance described matches all the restrictions in \Cref{theo:coverage_hardness}, under minor variations.
	
	\paragraph{($\Rightarrow$)}
	Let $\varphi$ be 3-CNF formula with $n$ variables $v_1, \ldots, v_n$ and $m$ clauses $C_1, \ldots, C_m$, let $\mathcal{B}$ be a set of boxes created as described above for $\varphi$, and let $\{v_1 = \alpha_1, \ldots, v_n = \alpha_n\}$ be an assignment which satisfies $\varphi$. We create a coverage kernel \set{K} of \set{B} as follows:
	\begin{itemize}
		\item  For each variable gadget for $v_i \in \{v_1, \ldots, v_n\}$ such that $\alpha_i = 0$ (resp. $\alpha_i = 1$), add to \set{K} all but its red (resp. blue) rectangles, thus covering the entire gadget minus its red (resp. blue) connection regions.
		This uncovered regions, which must connect with clauses in which the literal of $v_i$ evaluates to 0, will be covered later with the legs of the clause gadgets.
		Over all variables, we add to \set{K} a total of $\sum_{i=1}^{n}{\left( (7-2)c_{v_i} + (4-1)\right)} = 5(3m) + 3n = 15m+3n$ rectangles.%
		
		\item For each clause $C_i \in \{C_1, \ldots, C_n\}$, add to \set{K} all its green rectangles and the legs that connect with connection regions of the variable gadgets left uncovered in the previous step.
		Note that at least one of the legs of $C_i$  is not added to \set{K} since at least one of the literals in the clause evaluates to 1, and  the connection region corresponding to that literal is already covered by the variable gadget.
		Thus, the redundant regions of the gadget for $C_i$ can be covered with five of its black rectangles (including the legs already added). 
		So, finally add to \set{K} black rectangles from the clause for $C_i$ as needed (up to a total of five), until all the redundant regions are covered (and with the green rectangles, the entire gadget).
		Over all clauses, we add a total of $(11+5)m = 16m$ rectangles.
	\end{itemize}
	By construction, \set{K} is a coverage kernel of \set{B}: it covers completely every variable and clause gadget.
	Moreover, the size of \set{K} is $(15m + 3n) + 16m = 31m + 3n$.
	
	\paragraph{($\Leftarrow$)}
	
	Let $\varphi$ be 3-CNF formula with $n$ variables $v_1, \ldots, v_n$ and $m$ clauses $C_1, \ldots, C_m$,
	let $\mathcal{B}$ be a set of boxes created as described above for $\varphi$, 
	and let \set{K} be a coverage kernel of \set{B} whose size is $(15m + 3n) + 16m = 31m + 3n$.
	Any coverage kernel of \set{B} must include all its green rectangles. Furthermore, to cover any clause gadget at least five of its black rectangles are required, and to cover any variable rectangle, at least half of its red/blue rectangles are required.
	Thus, any coverage kernel of \set{B} most have at least $(11+5)m = 16m$ of the clause rectangles, and at least  $\sum_{i=1}^{n}{\left( (3+4/2)c_{v_i} + (3)\right)} = 5(3m) + 3n = 15m+3n$ of the variable rectangles are required.
	Hence, \set{K} must be a coverage kernel of minimum size.  
	
	Since the redundant regions of any two gadgets are independent, \set{K} must cover each gadgets optimally  (in a local sense).
	Given that the intersection graph of the red/blue rectangles of a variable gadget is a ring (see \Cref{fig:variablegadget}.d for an illustration), the only way to cover a variable gadget optimally is by choosing either all its blue or all its red rectangles (together with the green rectangles).
	Hence, the way in which every variable gadget is covered is consistent with an assignment for its variable as described before in the intuition. 
	Moreover, the assignment induced by \set{K} must satisfy $\varphi$: in each clause gadget, at least one of the legs was discarded (to cover the gadget with 5 rectangles), and at least the literal in the clause corresponding to that leg evaluates to 1.
	
	\paragraph{Meeting the restrictions.}
	Now we prove that the instance of \textsc{Minimum Coverage Kernel} generated for the reduction meets the restrictions of the theorem.
	First, we show the bounded clique-number and vertex-degree properties for the intersection graph of a clause gadget and its three respective variable gadgets. 
	In	\Cref{fig:restrictions} we illustrate the intersection graph for the clause $\varphi = (\overline{v_1} \lor v_2 \lor v_3)$. The sign of the variables in the clause does not change the maximum vertex degree or the clique-number of the graph, so the figure is general enough for our purpose.
	Since we consider the rectangles composing the instance to be closed rectangles, if two rectangles containing at least one point in common (in their interior or boundary), their respective vertices in the intersection graph are adjacent.
	Note that in \Cref{fig:restrictions} the vertices with highest degree are the ones corresponding to legs of the clause gadget (rectangles 1, 2, and 3).
	There are 4-cliques in the graph, for instance the right lower corner of the green rectangle denoted $h$ is covered also by rectangles $1,4$ and $d$, and hence their respective vertices form a clique.
	However, since there is no point that is covered by five rectangles at the same time, there are no 5-cliques in the graph. 
	\begin{figure}[h]
		\centering	
		\begin{minipage}{.9\textwidth}
			\centering	
			\hspace*{-.25in}
			\includegraphics[scale=.8,page=13]{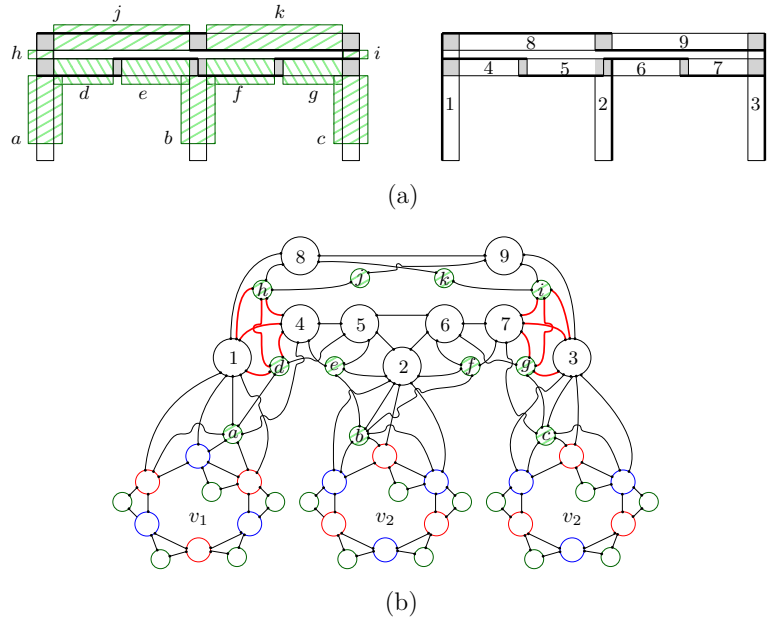}		
			\caption{The intersection graph of the instance of the \textsc{Minimum Coverage Kernel} problem corresponding to $\varphi = (\overline{v_1} \lor v_2 \lor v_3)$ (see the complete instance in \Cref{fig:clausevarsample}.b).
				$a)$ Numbering of the rectangles of the clause gadget.
				$b)$ The intersection graph of the instance: the vertices corresponding to the legs have degree 8, which is the highest; and the red fat edges highlight two 4-cliques.}
			\label{fig:restrictions}
		\end{minipage}
	\end{figure}
	Finally, note that, since the clause gadgets are located according to the planar embedding of the formula, they are pairwise independent.%
	\footnote{Two clause gadgets are pairwise independent if the rectangles composing them are pairwise independent, as well as the rectangles where they connect with their respective variables gadgets.}
	Thus, the bounds on the clique-number and vertex-degree of the intersection graph of any clause gadget extend also to the intersection graph of an entire general instance.
	\qed
\end{proof}

\subsection{Extension to Box Cover}

Since the \textsc{Minimum Coverage Kernel} problem is a special case of the \textsc{Box Cover} problem, the result of \Cref{theo:coverage_hardness} also applies to the \textsc{Box Cover} problem.
However, in \Cref{theo:coverage_hardness_boxcover} we show that this problem remains hard under even more restricted settings. 

\begin{theorem}\label{theo:coverage_hardness_boxcover}
	Let $P$, $\mathcal{B}$ be a set of $m$ points and $n$ boxes in the plane, respectively,  and let $G$ be the intersection graph of $\mathcal{B}$.
	Solving \textsc{Box Cover} over $\mathcal{B}$ and $P$ is \emph{\textsf{NP}}-complete even 
	if every point in $P$ is covered by at most two boxes of $\mathcal{B}$, and
	$G$ is planar, has clique-number at most 2, and vertex-degree at most 4.
\end{theorem}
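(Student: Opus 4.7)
The plan is to build on the Planar 3-SAT reduction of Theorem~\ref{theo:coverage_hardness}, exploiting the additional flexibility of \textsc{Box Cover}: instead of having to cover the entire union of the input boxes, we only need to cover a prescribed finite point set $P$, and we may choose where to place those points. This flexibility is precisely what lets us both reduce the coverage multiplicity of every point down to $2$ and simplify the gadgets enough to control the intersection graph.

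The first step is to replace the two families of auxiliary features of the Theorem~\ref{theo:coverage_hardness} construction by single points. Each \emph{green} rectangle (whose sole purpose was to force some rectangle into every coverage kernel via a region of depth one) is replaced by one point of $P$ placed in a depth-one region and thus covered by exactly one box; this still forces the associated box into any box cover. Each \emph{redundant region} (where two boxes overlap to enforce a covering constraint) is replaced by one point of $P$ placed inside that region and thus covered by exactly two boxes. Consequently every point of $P$ is covered by at most two boxes, which is what we want.

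The second step is to redesign the gadgets so that the intersection graph has clique-number at most $2$, vertex-degree at most $4$, and is planar. I would replace each variable gadget by a simple cycle of $2c_v$ rectangles (consecutive rectangles overlapping in a small region containing exactly one point of $P$), whose intersection graph is an even cycle: planar, triangle-free, $2$-regular, and with exactly two minimum covers corresponding to the two truth values of the variable. For each clause I would attach three disjoint ``leg'' boxes, one per literal, each connected to the one rectangle of its variable gadget that represents the literal-satisfying assignment by sharing a single point of $P$. The legs themselves should be chained by a small tree of connector boxes carrying the per-clause points so that the minimum number of boxes required inside the clause gadget drops by exactly one whenever at least one literal is satisfied. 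Because the clause gadgets live inside the non-intersecting three-legged combs of the planar embedding of $\varphi$, and each leg contributes only one new edge (to a variable-cycle vertex), planarity, triangle-freeness and the degree bound of $4$ extend from the individual gadgets to the whole construction.

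The third step is the usual two-direction verification: a satisfying assignment yields a box cover of the chosen threshold size (by taking the corresponding half of each variable cycle plus the cheap cover of each clause gadget that exploits a satisfied literal), and conversely any box cover of that size must be locally optimal on each gadget, forcing each variable cycle to be covered by a consistent alternating half and forcing at least one literal of each clause to be satisfied. Membership in $\mathsf{NP}$ is immediate since verifying that a candidate subset covers the finite point set $P$ takes polynomial time. The main obstacle I expect is the clause gadget: it must simultaneously encode ``at least one literal is true'' with the correct counting, keep every point covered by at most two boxes, and avoid both triangles and vertices of degree $5$ at the junctions where its legs meet the variable cycles; getting a single gadget that satisfies all three constraints at once is the delicate geometric part of the argument.
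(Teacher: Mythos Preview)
Your overall strategy --- reuse the Planar \textsc{3-SAT} reduction of Theorem~\ref{theo:coverage_hardness}, replace the ``must be covered'' constraints by single points of $P$ placed in the overlap regions, and argue that each such point lies in at most two boxes --- is exactly what the paper does. The two-direction verification and the $\mathsf{NP}$-membership argument are fine.

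There is, however, a genuine mistake in your first step, and it leads you to over-engineer the second. You describe the green rectangles as existing ``to force some rectangle into every coverage kernel'' and propose replacing each by a point in a depth-one region. That inverts their role: in Theorem~\ref{theo:coverage_hardness} the green rectangles exist to \emph{cover} the depth-one parts of the red/blue/black rectangles, precisely so that those rectangles are \emph{not} all forced; the green rectangles are then themselves forced by separate exclusive regions. If you drop the green rectangles and place a point in each former depth-one region, that point is covered by exactly one red/blue/black box, which forces \emph{every} gadget box into the cover and trivializes the instance. The correct move --- and the paper's --- is simply to discard the green rectangles and put points \emph{only} in the redundant and connection regions, nowhere else.

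Because of this, the wholesale redesign of your second step is unnecessary. The paper keeps the variable and clause gadgets essentially intact and makes a single local change: in each clause gadget it slides rectangles $5$ and $6$ apart so they no longer intersect. With the green rectangles gone, this one tweak already yields an intersection graph that is planar (it embeds inside the comb drawing of $\varphi$), triangle-free, and of maximum degree~$3$ --- the legs are the highest-degree vertices --- which is even stronger than the degree~$4$ in the statement. So the ``delicate geometric part'' you anticipated for the clause gadget is resolved not by a new tree-shaped connector but by removing a single intersection from the existing nine-rectangle gadget.
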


\begin{proof}
	We use the same reduction from \textsc{Planar 3-SAT}, but with three main variations in the gadgets: we drop the green rectangles of both the variable and clause gadgets, add points within the redundant and connection region of both variable and clause gadgets, and separate the rectangles numbered 5 and 6 of each clause gadget so they do not intersect (see \Cref{fig:boxcoverinstance} for an example).
	\begin{figure}[h]
		\centering	
		\begin{minipage}{.9\textwidth}
			\centering	
			\hspace*{-.25in}
			\includegraphics[scale=.8,page=14]{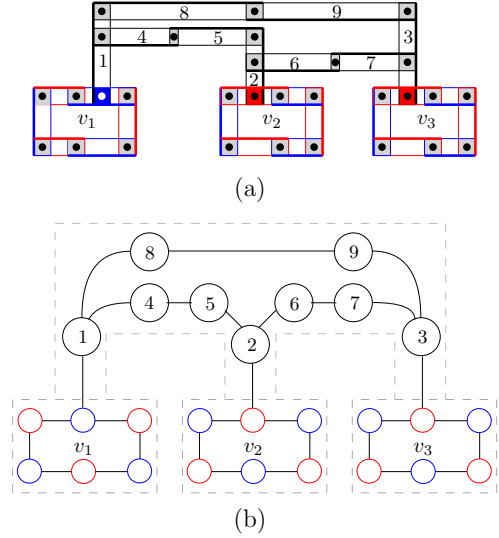}		
			\caption{
				$a)$ The instance of the \textsc{Box Cover} problem corresponding to $\varphi = (\overline{v_1} \lor v_2 \lor v_3)$.
				$b)$ The intersection graph of the instance: the vertices corresponding to the legs have degree 3, which is the highest; there are no 3-cliques in the graph; and the graph is planar and can be drawn within the planar embedding of $\varphi$ (highlighted with dashed lines). }
			\label{fig:boxcoverinstance}
		\end{minipage}
	\end{figure}
	Since the interior of every connection or redundant region is covered by at most two of the rectangles in the gadgets, every point of the instance we create is contained in at most two boxes.
	In	\Cref{fig:boxcoverinstance}.b we illustrate the intersection graph for the clause $\varphi = (\overline{v_1} \lor v_2 \lor v_3)$. 
	Since the sign of the variables in the clause does not change the maximum vertex degree or the clique-number of the graph, or its planarity, the properties we mention next are also true for any clause.
	Note that three is the maximum vertex-degree of the intersection graph, and  that there are no 3-cliques.
	Also note that the intersection graph can be drawn within the planar embedding of $\varphi$ so that no two edges cross, and hence the graph is planar.
	Again, due to the pairwise independence of the clause gadgets, these properties extend to the entire intersection graph of a general instance.
	\qed
\end{proof}

In the next section, we complement these hardness results with two approximation algorithms for the \textsc{Minimum Coverage Kernel} problem.

\section{Efficient approximation of Minimum Coverage Kernels}\label{sec:covkernels_approximation}
Let $\mathcal{B}$ be a set of $n$ boxes in $\mathbb{R}^{d}$, and let $\mathcal{D(B)}$ be a coverage discretization of $\mathcal{B}$ (as defined in \Cref{sec:background}).
A \emph{weight index} for $\mathcal{D(B)}$ is a data structure which can perform the following operations: 
\begin{itemize}
	\item  \emph{Initialization:} Assign an initial unitary weight to every point in $\mathcal{D(B)}$;
	\item \emph{Query:} Given a box $b \in \mathcal{B}$, find the total weight of the points in $b$.
	\item \emph{Update:} Given a box $b \in \mathcal{B}$, multiply the weights of all the points within $b$ by a given value $\alpha \ge 0$;
\end{itemize}
We assume that the weights are small enough so that arithmetic operations over the weights can be performed in constant time. 
There is a trivial implementation of a weight index with initialization and update time within $\bigo{n^{d}}$, and with constant query time. 
In this section we describe an efficient implementation of a weight index, and combine this data structure  with two existing approximation algorithms for the \textsc{Box Cover} problem~\cite{Lovasz75,BronnimannG95} and obtain improved approximation algorithms (in the running time sense) for the \textsc{Minimum Coverage Kernel} problem.

\subsection{An Efficient Weight Index for a Set of Boxes}
We describe a weight index for $\mathcal{D(B)}$ which can be initialized in time within $\bigo{n^\frac{d+1}{2}}$, and with query and update time within $\bigo{n^\frac{d-1}{2}\log n}$.
Let us consider first the case of a set $I$ of $n$ intervals.

\paragraph{\textbf{A weight index for a set of intervals}.}
A trivial weight index which explicitly saves the weights of each point in $\mathcal{D}(I)$ can be initialized in time within $\bigo{n \log n}$, has linear update time, and constant query time.
We show that by sacrificing query time (by a factor within $\bigo{\log n}$) one can improve update time to within $\bigo{\log n}$.
The main idea is to maintain the weights of each point of  $\mathcal{D}(I)$ indirectly using a tree.

Consider a balanced binary tree whose leafs are in one-to-one correspondence with the values in $\mathcal{D}(I)$ (from left to right in a non-decreasing order).
Let $p_v$ denote the point corresponding to a leaf node $v$ of the tree.
In order to represent the weights of the points in $\mathcal{D}(I)$, we store a value $\mu(v)$ at each node $v$ of the tree subject to the following invariant:
for each leaf $v$, the weight of the point $p_v$ equals the product of the values $\mu(u)$ of all the ancestors $u$ of $v$ (including $v$ itself).
The $\mu$ values allow to increase the weights of many points with only a few changes. 
For instance, if we want to double the weights of all the points we simply multiply by 2 the value $\mu(r)$ of the root $r$ of the tree. 
Besides the $\mu$ values, to allow efficient query time we also store at each node $v$ three values $min(v),max(v),\omega(v)$:
the values  $min(v)$ and $ max(v)$ are the minimum and maximum $p_u$, respectively, such that $u$ is a leaf of the tree rooted at $v$; 
the value $\omega(v)$ is the sum of the weights of all $p_u$ such that $u$ is a leaf of the tree rooted at $v$.

Initially, all the $\mu$ values are set to one.
Besides, for every leaf $l$ of the tree $\omega(l)$ is set to one, while  $min(l)$ and $max(l)$ are set to $p_l$. The $min$, $max$ and $\omega$ values of every internal node $v$ with children $l,r$,  are initialized in a bottom-up fashion as follows:
 $min(v) = \min\{min(l), min(r)\}$;
 $max(v) = \max\{max(l), max(r)\}$;
 $\omega(v)=\mu(v)\cdot\left(\omega(l) + \omega(r)\right)$.
It is simple to verify that after this initialization, the tree meets all the invariants mentioned above.
We show in \Cref{theo:windex_intervals} that this tree can be used as a weight index for $\mathcal{D}(I)$.

\begin{theorem}~\label{theo:windex_intervals}
	Let $I$	be a set of $n$ intervals in $\mathbb{R}$. There exists a weight index for $\mathcal{D}(I)$ which can be initialized in time within $\bigo{n \log n}$, and with query and update time within $\bigo{\log n}$.
\end{theorem}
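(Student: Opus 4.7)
The plan is to verify that the balanced binary tree described just before the theorem statement supports the three operations within the claimed bounds. First I would observe that $|\mathcal{D}(I)| \in \bigo{n}$: the grid induced by $n$ intervals in $\mathbb{R}$ has at most $2n-1$ non-empty cells, so the tree has within $\bigo{n}$ nodes. Initialization then reduces to sorting the endpoints in time within $\bigo{n \log n}$, threading the resulting points as the leaves of a balanced binary tree in time within $\bigo{n}$, and computing $\min, \max, \mu, \omega$ bottom-up, again in time within $\bigo{n}$; this matches the $\bigo{n \log n}$ bound for initialization.

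The crux of the argument is a single invariant: for every node $v$,
\[
\sum_{u \text{ leaf in subtree of } v} \text{weight}(p_u) \;=\; \omega(v) \cdot \prod_{w \text{ strict ancestor of } v} \mu(w).
\]
I would prove this by induction on the depth of $v$, using the recursive definition $\omega(v) = \mu(v) \cdot (\omega(l) + \omega(r))$ and the rule that a leaf's weight is the product of $\mu$ values along its entire ancestor chain (itself included).

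For a \emph{query} on an interval $b$, I would walk down from the root, using $\min$ and $\max$ at each node to identify the $\bigo{\log n}$ canonical subtrees whose leaves together equal $\mathcal{D}(I) \cap b$; while descending I maintain the running product $M$ of $\mu$ at the strict ancestors traversed so far. For each canonical root $u$ reached, I add $M \cdot \omega(u)$ to the answer; by the invariant this yields exactly the required weight sum, in time within $\bigo{\log n}$. For an \emph{update} with factor $\alpha$, I locate the same $\bigo{\log n}$ canonical roots and, at each canonical root $u$, multiply both $\mu(u)$ and $\omega(u)$ by $\alpha$; by the invariant, this correctly scales all leaf weights strictly inside $u$'s subtree while keeping $\omega(u)$ self-consistent. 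As the traversal unwinds, I recompute $\omega(v) \gets \mu(v)\cdot(\omega(l)+\omega(r))$ at every visited ancestor $v$ of a canonical root to restore the invariant; only $\bigo{\log n}$ nodes are touched in total.

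The main obstacle is getting the $\mu$-bookkeeping right: because the weight of a leaf is a product along its entire ancestor chain, both queries and updates must correctly account for the $\mu$ contributions of nodes \emph{above} the canonical decomposition, which cannot be summarised locally at any single node. The invariant above is designed precisely to handle this asymmetry, and once it is established the time bounds for query and update follow routinely from the fact that the canonical decomposition of any interval in a balanced binary tree has size $\bigo{\log n}$.
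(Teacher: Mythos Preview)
Your proposal is correct and follows essentially the same approach as the paper: a balanced binary tree over the $\bigo{n}$ points of $\mathcal{D}(I)$, with lazy multiplicative tags $\mu$ and subtree aggregates $\omega$, where queries and updates touch only the $\bigo{\log n}$ canonical nodes of the interval and fix up $\omega$ on the path back to the root. The paper presents the same idea via explicit recursive procedures (multiplying by $\mu(t)$ at each level of the descent and recomputing $\omega(t)=\mu(t)\cdot(\omega(l)+\omega(r))$ on return), whereas you phrase it through the explicit invariant $\sum_{u}\text{weight}(p_u)=\omega(v)\cdot\prod_{w\text{ strict ancestor of }v}\mu(w)$ and a running product $M$; these are two descriptions of the same bookkeeping.
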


\begin{proof}
	Since intervals have linear union complexity, $\mathcal{D}(I)$ has within $\bigo{n}$ points, and it can be computed in linear time after sorting, for a total time within $\bigo{n \log n}$.
	We store the points in the tree described above. Its initialization can be done in linear time since the tree has within $\bigo{n}$ nodes, and when implemented in a bottom-up fashion, the initialization of the $\mu, \omega, min,$ and $max$ values, respectively, cost constant time per node. 
	
	To 	analyze the query time, let $\textsf{totalWeight}(a,b,t)$ denote the procedure which finds the total weight of the points corresponding to leafs of the tree rooted at $t$ that are in the interval $[a,b]$.
	This procedure can be implemented as follows:
	
	\begin{enumerate}
		\item if $[a,b]$ is disjoint to $[min(t),max(t)]$ return 0;
		\item if $[a,b]$ completely contains $[min(t),max(t)]$ return $\omega(r)$;
		\item if both conditions fail (leafs must meet either 1. or 2.), let $l,r$ be the left and right child of $t$, respectively;
		\item if $a > max(l)$ return $\mu(t) \cdot \textsf{totalWeight}(a,b,r)$;
		\item if $b < min(r)$ return $\mu(t) \cdot \textsf{totalWeight}(a,b,l)$;
		\item otherwise return $\mu(t) (\textsf{totalWeight}(a, \infty, l) + \textsf{totalWeight}(-\infty,b,r))$.
	\end{enumerate}
	
	\noindent
	Due to the invariants to which the $min$ and $max$ values are subjected, every leaf $l$ of $t$ corresponding to a point in $[a,b]$ has an ancestor (including $l$ itself) which is visited during the call to \textsf{totalWeight} and which meets the condition in step 2.
	For this, and because of the invariants to which the $\omega$ and $\mu$ values are subjected, the procedure $\textsf{totalWeight}$ is correct.
	Note that the number of nodes visited is at most 4 times the height $h$ of the tree:
	when both children need to be visited, one of the endpoints of the interval to query is replaced by $\pm \infty$,
	which ensures that in subsequent calls at least one of the children is completely covered by the query interval.
	Since $h \in \bigo{\log n}$, and the operations at each node consume constant time, the running time of  \textsf{totalWeight} is within $\bigo{\log n}$.
	
	Similarly, to analyze the update time, let $\textsf{updateWeights}(a,b,t, \alpha)$ denote the procedure which multiplies by a value $\alpha$ the weights of the points in the interval $[a,b]$ stored in leafs descending from $t$.
	This can be implemented as follows:
	
	\begin{enumerate}
		\item if $[a,b]$ is disjoint to $[min(t),max(t)]$, finish;
		\item if $[a,b]$ completely contains $[min(t),max(t)]$ set $\mu(r)=\alpha \cdot \mu(r)$, set $ \omega(r)=\alpha \cdot \omega(r)$,  and finish;
		\item if both conditions fail, let $l,r$ be the left and right child of $t$, respectively;
		\item if $a > max(l)$, call $\textsf{updateWeights}(a,b,r, \alpha)$;
		\item else if $b < min(r)$, call $\textsf{updateWeights}(a,b,l, \alpha)$;
		\item otherwise,  call $\textsf{updateWeights}(a, \infty, l, \alpha)$, and $\textsf{updateWeights}(-\infty,b,r, \alpha)$;
		\item finally, after the recursive calls set $\omega(t)=\mu(t) \cdot \left(\omega(l) + \omega(r)\right)$, and finish.
	\end{enumerate}

	\noindent
	Note that, for every point $p_v$ in $[a,b]$ corresponding to a leaf $v$ descending from $t$, the $\mu$ value of exactly one of the ancestors of $u$ changes (by a factor of $\alpha$):
	at least one changes because of the invariants to which the $min$ and $max$ values are subjected (as analyzed for \textsf{totalWeight}); and no more than one can change because once $\mu$ is assigned for the first time to some ancestor $u$ of $v$, the procedure finishes leaving the descendants of $v$ untouched. 
	The analysis of the running time is analogous to that of \textsf{totalWeight}, and thus within $\bigo{\log n}$.
	\qed
\end{proof}


The weight index for set of intervals described in \Cref{theo:windex_intervals} plays an important role in obtaining an index for a higher dimensional set of boxes. In a step towards that, we first describe how to use one dimensional indexes to obtain indexes for another special case of sets of boxes, this time in high dimension.

\paragraph{\textbf{A weight index for a set of slabs}.}
A box $b$ is said to be a slab within another box $\Gamma$ if $b$ covers completely $\Gamma$ in all but one dimension (see \Cref{fig:slabs}.a for an illustration). 
\begin{figure}[t]
	\centering
	\begin{minipage}{.9\textwidth}
		\centering	
		\includegraphics[scale=.8]{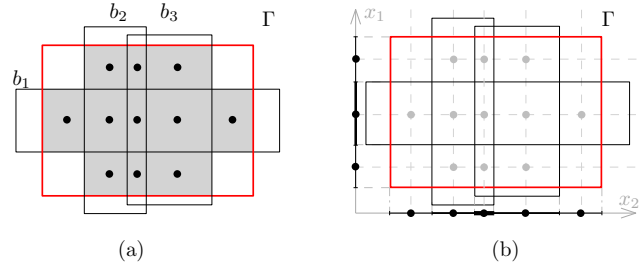}
		\caption{(a) An illustration in the plane of three boxes equivalent to slabs when restricted to the box $\Gamma$.
			The dots correspond to a set of points which discretize the (grayed) region within $\Gamma$ covered by the slabs.
		}\label{fig:slabs}
	\end{minipage}
\end{figure}
Let $\mathcal{B}$ be a set of $n$ $d$-dimensional boxes that are slabs within another box $d$-dimensional box $\Gamma$.
Let $\mathcal{B} |_\Gamma$ denote the set $\{b \cap \Gamma \mid b \in \mathcal{B}\}$ of the boxes $\in  \mathcal{B}$ restricted to $\Gamma$.
We describe a weight index for $\mathcal{D}(\mathcal{B} |_\Gamma)$ with initialization time within the size $n$, and with update and query time within $\bigo{n \log n}$.

For all $i = [1..d]$, let $\mathcal{B}_i$ be the subset of slabs that are orthogonal to the $i$-th dimension, and let $I_i$ be the set of intervals resulting from projecting $\Gamma$ and each rectangle in $\mathcal{B} |_\Gamma$ to the $i$-th dimension (see \Cref{fig:slabs}.b for an illustration).
The key to obtain an efficient weight index for a set of slabs is the fact that weight indexes for $\mathcal{D}(I_1), \ldots \mathcal{D}(I_d)$ can be combined without much extra computational effort into a weight index for $\mathcal{D}(\mathcal{B} |_\Gamma)$.
Let $p$ be a point $\mathcal{D}(\mathcal{B} |_\Gamma)$ and let $x_i(p)$ denote the value of the $i$-th coordinate of $p$. Observe that for all $i \in [1..d]$, $x_i(p) \in \mathcal{D}(I_i)$ (see \Cref{fig:slabs}.b for an illustration).
This allows the representation of the weight of each point $p \in \mathcal{D(B)}$ by means of the weights of $x_i(p)$ for all $i \in [1..d]$. We do this by maintaining the following \emph{weight invariant}: the weight of a point $p \in \mathcal{D}(\mathcal{B} |_\Gamma)$ is equal to $\prod_{i=1}^{d}{\left(\text{weight of }x_i(p) \text{ in } \mathcal{D}(I_i)\right)}$.
\begin{lemma}~\label{lem:windex_slabs}
	Let $\mathcal{B}$ be a set of $n$ $d$-dimensional boxes that are equivalent to slabs when restricted to another $d$-dimensional box $\Gamma$. There exists a weight index for $\mathcal{D(B)}$ which can be initialized in time within $\bigo{n \log n}$, and with query and update time within $\bigo{\log n}$.
\end{lemma}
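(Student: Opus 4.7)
The plan is to build, as suggested by the paragraph preceding the lemma, one one-dimensional weight index (obtained by \Cref{theo:windex_intervals}) for each of the $d$ sets of intervals $\mathcal{D}(I_1),\ldots,\mathcal{D}(I_d)$, and to simulate the desired index for $\mathcal{D}(\mathcal{B}|_\Gamma)$ through the product invariant stated above the lemma. Since each $|I_i|\in\bigo{n}$ and $d$ is a fixed constant, building all $d$ one-dimensional indexes costs time within $\bigo{n\log n}$, which handles initialization. Initially every one-dimensional point carries weight $1$, so the product invariant gives weight $1$ to each $p\in\mathcal{D}(\mathcal{B}|_\Gamma)$, as required.

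The first key step is to relate membership in $\mathcal{D}(\mathcal{B}|_\Gamma)$ to the projected indexes. Let $\pi_i(\cdot)$ denote projection onto the $i$-th axis. I will show that for any $b\in\mathcal{B}_i$,
\begin{equation*}
    b\cap\mathcal{D}(\mathcal{B}|_\Gamma)
    = \bigl(\pi_i(b)\cap\mathcal{D}(I_i)\bigr)\times\prod_{j\neq i}\mathcal{D}(I_j).
\end{equation*}
The containment $\supseteq$ follows because any $(x_1,\ldots,x_d)$ in the right-hand side has its enclosing grid cell entirely inside $\Gamma$ (since each $x_j\in\pi_j(\Gamma)$) and entirely inside $b$ (since its $i$-th projection lies in $\pi_i(b)$ and $b$ is a slab across $\Gamma$ in the other dimensions). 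The containment $\subseteq$ is immediate from $p\in b$ forcing $x_i(p)\in\pi_i(b)$, together with $p\in\Gamma$ forcing $x_j(p)\in\pi_j(\Gamma)$. The same argument shows that any \emph{phantom} product point $(x_1,\ldots,x_d)\in\prod_j\mathcal{D}(I_j)\setminus\mathcal{D}(\mathcal{B}|_\Gamma)$ satisfies $x_i\notin\pi_i(b)$ for every $b\in\mathcal{B}_i$ and every $i$, so such points never appear inside any query box and their weights are never updated.

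With this structural observation in hand, update and query implementations are straightforward. To multiply the weights of all points inside $b\in\mathcal{B}_i$ by $\alpha$, it suffices to call the update operation of the $i$-th one-dimensional index on the interval $\pi_i(b)$ with factor $\alpha$: by the displayed identity and the product invariant, the weight of each $p\in b\cap\mathcal{D}(\mathcal{B}|_\Gamma)$ is then multiplied by exactly $\alpha$, while every other $p\in\mathcal{D}(\mathcal{B}|_\Gamma)$ is untouched. To answer a query for $b\in\mathcal{B}_i$, I would exploit the fact that a sum of products over a Cartesian product factors as a product of sums, so that
\begin{equation*}
    \sum_{p\in b\cap\mathcal{D}(\mathcal{B}|_\Gamma)} w(p)
    = \Bigl(\sum_{x_i\in \pi_i(b)\cap\mathcal{D}(I_i)} w_i(x_i)\Bigr)\cdot\prod_{j\neq i}\Bigl(\sum_{x_j\in\mathcal{D}(I_j)} w_j(x_j)\Bigr),
\end{equation*}
where each factor is returned by one \textsf{totalWeight} call on a one-dimensional index. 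Each call costs $\bigo{\log n}$ by \Cref{theo:windex_intervals}, and the constant $d$ calls give the claimed $\bigo{\log n}$ bounds.

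The main subtlety, and the step I expect to require the most care when writing up, is the identity for $b\cap\mathcal{D}(\mathcal{B}|_\Gamma)$: one must verify that the grids induced by $\mathcal{B}|_\Gamma$ in each dimension coincide with those induced by the $I_i$ (which follows from projecting slab boundaries), and that phantom product points neither contribute to queries nor get touched by updates, so that the product invariant is preserved for every genuine point of $\mathcal{D}(\mathcal{B}|_\Gamma)$ throughout any sequence of operations. The rest is bookkeeping of the $\bigo{\log n}$ bounds inherited from \Cref{theo:windex_intervals}.
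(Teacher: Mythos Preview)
Your proposal is correct and follows essentially the same approach as the paper: build $d$ one-dimensional weight indexes via \Cref{theo:windex_intervals}, maintain the product invariant, and implement query/update on a slab $b\in\mathcal{B}_i$ by touching only the $i$-th index on $\pi_i(b)$ and multiplying by the root totals of the others. In fact, your treatment is slightly more careful than the paper's, which asserts the Cartesian-product description of $b\cap\mathcal{D}(\mathcal{B}|_\Gamma)$ without discussing phantom product points; your observation that such points are never hit by any query or update is exactly the missing justification.
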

\begin{proof}
	Let $\mathcal{B}_i$ be the subset of $\mathcal{B}|_\Gamma$ orthogonal to the $i$-th dimension, and let $I_i$ be the set of intervals resulting from projecting $\Gamma$ and each rectangle in $\mathcal{B}_i$ to the $i$-th dimension.
	Initialize a weight index for $\mathcal{D}(I_i)$ as in \Cref{theo:windex_intervals}, for all $i \in [1..d]$. 
	Since the weights of all the points in the one dimensional indexes are initialized to one, the weight of every point in $\mathcal{D}(\mathcal{B}|_\Gamma)$ is also initialized to one, according to the weight invariant. This initialization can be done in total time within $\bigo{n \log n}$.
	
	Let $b \in \mathcal{B}|_\Gamma$  be a box which covers $\Gamma$ in every dimension except for the $i$-th one, for some $i \in [1..d]$ (i.e., $b \in B_i$), and let $P_i$ bet the subset of $\mathcal{D}(I_i)$ contained within the projection of $b$ to the $i$-th dimension.
	The set of points of $\mathcal{D}(\mathcal{B}|_\Gamma)$  that are within $b$ can be generated by the expression
	$\{ (a_1, \ldots, a_d) \mid a_1 \in  \mathcal{D}(I_1) \land \ldots \land a_{i-1} \in \mathcal{D}(I_{i-1}) \land a_i \in P_i \land a_{i+1} \in  \mathcal{D}(I_{i+1}) \land \ldots \land a_d \in \mathcal{D}(I_d) \}$.
	Therefore, the total weight of the points within $b$ is given by the total weight of the points $x_i(p)$ for all $p \in P_i$ multiplied by the total weight of the points in $ \mathcal{D}(I_j)$, for all $j=[1..d]$ distinct from $i$.
	
	To query the total weight of the points of $\mathcal{D}(\mathcal{B} |_\Gamma)$ within a box $b \in B_i$ we query the weight index of $\mathcal{D}(I_i)$ to find the total weight of the points in the projection of $b$ to the $i$-th dimension (in time within $\bigo{\log n}$), then query the remaining $d-1$ indexes to find the total weight store in the index (stored at the $\omega$ value of the root of the trees), and return the product of those $d$ values. Clearly the running time is within $\bigo{\log n}$.
	
	The update is similar: to multiply by a value $\alpha$ the weight of all the points of $\mathcal{D}(\mathcal{B} |_\Gamma)$ within a box $b \in B_i$ we simply update the weight index of  $\mathcal{D}(I_i)$ multiplying by $\alpha$ all the weights of the points within the projection of $b$ to the $i$-th dimension, and leave the other $d-1$ weight indexes untouched.
	The running time of this operation is also within $\bigo{n \log n}$, and the invariant remains valid after the update.
	\qed	
\end{proof}

\Cref{lem:windex_slabs} shows that there are weight indexes for a set of slabs $\mathcal{B}$ within another box $\Gamma$ that significantly improve the approach of explicitly constructing $\mathcal{D}(\mathcal{B}|_\Gamma)$, 
an improvement that grows exponentially with the dimension $d$.
We take advantage of this to describe a similar improvement for the general case.


\paragraph{\textbf{A Weight Index for The General Case}.}
We now show how to maintain a weight index of a general set $\mathcal{B}$ of $d$-dimensional boxes.
The main idea is to partition the space into cells such that, within each cell $c$, any box $b \in \mathcal{B}$ either completely contains $c$ or is equivalent to a \emph{slab}.
Then, we use weight indexes for slabs (as described in \Cref{lem:windex_slabs})  to index the weights within each of the cells.
This approach was first introduced by \textcite{Overmars1991} in order to compute the volume of the region covered by a set boxes, and similar variants were used since then to compute other measures~\cite{2017-COCOON-DepthDistributionInHighDimension-BabrayPerezRojas,Chan2013,YildizHershbergerSuri11}.
The following lemma summarizes the key properties of the partition we use:

\begin{lemma}[Lemma 4.2 of \textcite{Overmars1991}]~\label{lem:oyap}
	Let $\mathcal{B}$ be a set of $n$ boxes in $d$-dimensional space. There exist a binary partition tree for storing any subset of $\mathcal{B}$ such that
	\begin{itemize}
		\item It can be computed in time within $\bigo{n^{d/2}}$, and it has $\bigo{n^{\frac{d}{2}}}$ nodes;
		\item Each box is stored in $\bigo{n^{\frac{d-1}{2}}}$ leafs;
		\item The boxes stored in a leaf are slabs within the cell corresponding to the node;
		\item Each leaf stores no more than $\bigo{\sqrt{n}}$ boxes.
	\end{itemize}
\end{lemma}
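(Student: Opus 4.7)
The plan is to reconstruct the Overmars--Yap construction explicitly. I would initialize the root as the bounding box of $\mathcal{B}$, and then refine the partition dimension by dimension. For each axis $i = 1, \ldots, d-1$ in turn, take every current leaf cell $C$ in which more than $\sqrt{n}$ boxes of $\mathcal{B}$ have a face perpendicular to axis $i$ lying strictly inside $C$, and split $C$ along the median such coordinate. Repeat within each child until no more than $\sqrt{n}$ boxes remain ``active'' along axis $i$ at that leaf, where a box is active along $i$ at $C$ if its projection on axis $i$ properly intersects $C$ without containing it. Once this threshold is reached, stop splitting along $i$ in that subtree and move on to axis $i+1$.

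The slab property then follows by construction: after processing all of $1, \ldots, d-1$, any box in $\mathcal{B}$ that is not disjoint from a leaf $C$ and does not entirely contain $C$ must span $C$ along every processed axis (otherwise the stopping rule for that axis would not have been reached), so $C$ can be strictly narrower than the box only along the $d$-th axis. That is precisely the slab condition, and by the stopping rule there are at most $\bigo{\sqrt{n}}$ such slabs per leaf.

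For the size bounds, the key observation is that processing one axis multiplies the number of leaves by a factor of $\bigo{\sqrt{n}}$: the splits halve the active count at each level, so reaching the $\sqrt{n}$ threshold from $n$ takes depth $\tfrac{1}{2}\log n$, producing $\bigo{\sqrt{n}}$ subleaves per starting cell. Iterating over $d$ axes yields $\bigo{n^{d/2}}$ leaves and hence $\bigo{n^{d/2}}$ tree nodes overall. For the per-box count, a fixed box $b$ remains active along axis $i$ in at most $\bigo{\sqrt{n}}$ cells at the end of phase $i$; charging each leaf that contains $b$ to the last phase in which $b$ was active and multiplying across the remaining phases gives $\bigo{n^{(d-1)/2}}$ leaves per box. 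The construction time is within $\bigo{n^{d/2}}$ using presorted coordinate lists and standard amortization at each split.

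The main obstacle is the tight charging for the per-box occurrence count: a naive product of the $\sqrt{n}$ factors across dimensions only gives $\bigo{n^{(d-1)/2} \operatorname{polylog} n}$, and shaving the logarithmic factors requires the careful paired-dimension argument of Overmars and Yap, which exploits the fact that once $b$ becomes a slab along some axis it induces no further splits in later phases. This is the delicate combinatorial ingredient on which the clean $n^{(d-1)/2}$ bound rests.
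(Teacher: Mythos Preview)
The paper does not supply a proof of this lemma; it is quoted as Lemma~4.2 of Overmars and Yap (1991) and used as a black box, so there is nothing in the paper to compare your sketch against beyond the citation itself.

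That said, your reconstruction of the slab property has a genuine gap. You assert that after phases $1,\ldots,d-1$, any box that neither misses nor contains a leaf $C$ ``must span $C$ along every processed axis (otherwise the stopping rule for that axis would not have been reached).'' But your own stopping rule halts once \emph{at most} $\sqrt{n}$ boxes remain active along axis~$i$, not once \emph{none} do. Each leaf can therefore still carry up to $\sqrt{n}$ boxes with an $i$-face strictly inside it, for every processed~$i$, and nothing in the construction prevents a single box from being among those exceptional $\sqrt{n}$ for two different axes at once; such a box fails to span $C$ in two directions and is not a slab. A second problem is that you never process axis~$d$, so there is no bound at all on the number of boxes with a $d$-face inside a leaf, and the $\bigo{\sqrt{n}}$-per-leaf claim breaks as well. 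Running a $d$-th phase would fix the leaf count to $\bigo{n^{d/2}}$ and cap the $d$-active boxes, but it still would not force every stored box to be a slab.

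The actual Overmars--Yap argument handles this by a staged recursion in which, roughly, the boxes that become ``short'' at phase~$i$ (the piles) are controlled separately from those that remain spanning, and the partition is refined so that each pile ends up spanning all the other axes in every leaf where it is stored. That refinement is the real content of the lemma, and it does not fall out of a flat per-axis median-split scheme with a $\sqrt{n}$ threshold. Your closing remark correctly flags the per-box occurrence bound as delicate, but the slab property is at least as delicate, and it is where your current plan actually breaks.
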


Consider the tree of \Cref{lem:oyap}.
Analogously to the case of intervals, we augment this tree with information to support the operations of a weight index efficiently. 
At every node $v$ we store two values $\mu(v), \omega(v)$:  the first allows to multiply all the weights of the points of $\mathcal{D(B)}$ that are maintained in leafs descending from $v$ (allowing to support updates efficiently);
while $\omega(v)$ stores the total weight of these points (allowing to support queries efficiently).
To ensure that all/only the nodes that intersect a box $b$ are  visited during a query or update operation, we store at each node the boundaries of the cell corresponding to that node.
Furthermore, at every leaf node $l$ we implicitly represent the points of $\mathcal{D(B)}$ that are inside the cell corresponding to $l$ using a weight index for slabs.

To initialize this data structure, all the $\mu$ values are set to one. Then the weight index within each leaf cell are initialized. Finally, the $\omega$ values of every node $v$ with children $l,r$,  are initialized in a bottom-up fashion setting
$\omega(v)=\mu(v)\cdot\left(\omega(l) + \omega(r)\right)$.
We show in \Cref{theo:windex_gcase} how to implement the weight index operations over this tree and we analyze its running times.

\begin{theorem}\label{theo:windex_gcase}
	Let $\mathcal{B}$ be a set of $n$ $d$-dimensional boxes. There is a weight index for $\mathcal{D(B)}$ which can be initialized in time within $\bigo{n^\frac{d+1}{2}}$, and with query and update time within $\bigo{n^\frac{d-1}{2}\log n}$.
\end{theorem}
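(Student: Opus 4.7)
The plan is to combine the binary partition tree of \Cref{lem:oyap} with the slab weight indexes of \Cref{lem:windex_slabs} to obtain the desired data structure, generalizing the one-dimensional construction of \Cref{theo:windex_intervals} to $d$ dimensions. Concretely, I would build the Overmars--Yap tree on $\mathcal{B}$ in time within $\bigo{n^{d/2}}$, and then at each leaf $l$, whose cell $\Gamma_l$ contains at most $\bigo{\sqrt{n}}$ boxes of $\mathcal{B}$ (each of which is a slab within $\Gamma_l$), build a slab weight index for $\mathcal{D}(\mathcal{B}|_{\Gamma_l})$ using \Cref{lem:windex_slabs}. Each such slab index takes time within $\bigo{\sqrt{n}\log n}$ to initialize, and since each box is stored in $\bigo{n^{(d-1)/2}}$ leaves, the aggregate work for the leaf indexes is within $\bigo{n^{(d+1)/2}}$, matching the claimed initialization bound.

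Next, to handle weights across the whole tree, I would augment every internal node $v$ with the values $\mu(v)$ and $\omega(v)$ exactly as in the one-dimensional case: $\mu(v)$ is a multiplicative factor to be propagated to the descendants of $v$, and $\omega(v)$ is the sum, over all points of $\mathcal{D(B)}$ stored in leaves descending from $v$, of the product of their recorded weight times the $\mu$ values of their proper ancestors (so that the weight of a point $p$ equals the product of $\mu$'s along the root-to-leaf path times the contribution maintained inside the leaf's slab index). I also attach to each node the cell boundaries, so that a query box can be tested in constant time for disjointness from, or full containment of, the cell. The $\omega$ values of internal nodes are computed bottom-up by $\omega(v)=\mu(v)\bigl(\omega(l)+\omega(r)\bigr)$ after the leaves' slab indexes are built.

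The query and update procedures then mimic \textsf{totalWeight} and \textsf{updateWeights} from \Cref{theo:windex_intervals}: at each visited node, if the query box is disjoint from the cell we return, if it fully contains the cell we use $\omega(v)$ (for a query) or multiply $\mu(v)$ and $\omega(v)$ by $\alpha$ (for an update) and return, otherwise we recurse into both children. At a leaf $l$ whose cell is only partially covered, the query box restricted to $\Gamma_l$ is a slab within $\Gamma_l$, so the slab index at $l$ answers or updates in time within $\bigo{\log n}$ by \Cref{lem:windex_slabs}, multiplied on the query side by the product of $\mu$ values along the path to $l$ (which can be carried as an accumulator during the descent).

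The main obstacle, and the heart of the argument, is showing that the total number of nodes visited during a single operation is within $\bigo{n^{(d-1)/2}}$. I would argue this as follows: the ``fully contained'' and ``disjoint'' nodes at which the recursion truncates are paid for by their parents, and a node is recursed into properly only when the boundary of the query box crosses its cell, i.e.\ when the box behaves as a non-trivial slab within the cell. By the geometric properties used to prove \Cref{lem:oyap} (the same bound that shows each box is stored in $\bigo{n^{(d-1)/2}}$ leaves), the number of cells of the partition whose interior meets the boundary of a single box is within $\bigo{n^{(d-1)/2}}$, and consequently the recursion visits within $\bigo{n^{(d-1)/2}}$ nodes. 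Adding the $\bigo{\log n}$ cost paid at each visited leaf for its slab-index call yields the stated $\bigo{n^{(d-1)/2}\log n}$ bound for both queries and updates; correctness of the maintained invariant after an update follows by induction on the tree, exactly as in the proof of \Cref{theo:windex_intervals}.
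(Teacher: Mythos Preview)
Your proposal is essentially the paper's proof: Overmars--Yap tree, slab weight indexes at the leaves, and the $\mu,\omega$ augmentation at internal nodes, with query/update recursions mirroring the interval case. One accounting step needs tightening: from ``the number of leaf cells meeting the boundary of $b$ is $\bigo{n^{(d-1)/2}}$'' you jump to ``the recursion visits $\bigo{n^{(d-1)/2}}$ nodes'', but this bounds only the visited \emph{leaves}; the internal nodes visited are ancestors of those leaves, so you must also use that the tree has height $\bigo{\log n}$ to bound them by $\bigo{n^{(d-1)/2}\log n}$. The paper makes this height argument explicit, and it (together with the $\bigo{\log n}$ leaf cost) is what yields the $\log n$ factor in the query/update bound.
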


\begin{proof}
	The initialization of the index, when implemented as described before, runs in constant time for each internal node of the tree, and in time within $\bigo{\sqrt{n} \log n}$ for each leaf (due to \Cref{lem:windex_slabs}, and to the last item of \Cref{lem:oyap}). Since the tree has $\bigo{n^{\frac{d}{2}}}$ nodes, the total running time of the initialization is within $\bigo{n^{\frac{d+1}{2}}\log n}$.
	
	Since the implementations of the query and update operations are analogous to those for the intervals weight index (see the proof of \Cref{theo:windex_intervals}), we omit the details of their correctness. 
	While performing a query/update operation at most $\bigo{n^{\frac{d-1}{2}}}$ leafs are visited (due to the third item of \Cref{lem:oyap}), and since the height of the tree is within $\bigo{\log n}$, at most $\bigo{n^{\frac{d-1}{2}}\log n}$ internal nodes are visited in total.
	Hence, the cost of a query/update operation within each leaf is within $\bigo{\log n}$ (by \Cref{lem:windex_slabs}), and is constant within each internal node. 
	Thus, the total running of a query/update operation is within $\bigo{n^\frac{d-1}{2}\log n}$.
	\qed
\end{proof}

\subsection{Practical approximation algorithms for Minimum Coverage Kernel.}
Approximating the \textsc{Minimum Coverage Kernel} of a set $\mathcal{B}$ of boxes via approximation algorithms for the \textsc{Box Cover} problem requires that $\mathcal{D(B)}$ is explicitly constructed.
However, the weight index described in the proof of \Cref{theo:windex_gcase} can be used to significantly improve the running time of these algorithms. We describe below two examples.

The first algorithm we consider is the greedy $\bigo{\log n}$-approximation algorithm by \textcite{Lovasz75}. 
The greedy strategy applies naturally to the \textsc{Minimum Coverage Kernel} problem: iteratively pick the box which covers the most yet uncovered points of $\mathcal{D(B)}$, until there are no points of $\mathcal{D(B)}$ left to cover. 
To avoid the explicit construction of $\mathcal{D(B)}$ three operations most be simulated:
($i$.) find how many uncovered points are within a given a box $b \in \mathcal{B}$; 
($ii$.) delete the points that are covered by a box $b \in \mathcal{B}$; and
($iii$.) find whether a subset $\mathcal{B'}$ of  $\mathcal{B}$ covers all the points of $\mathcal{D(B)}$.

For the first two we use the weight index described in the proof of \Cref{theo:windex_gcase}:
to delete the points within a given box $b \in \mathcal{B}$ we simply multiply the weights of all the points of $\mathcal{D(B)}$ within $b$ by $\alpha = 0$;
and finding the number of uncovered points within a box $b$ is equivalent to finding the total weight of the points of $\mathcal{D(B)}$ within $b$.
For the last of the three operations we use the following observation:
\begin{observation}
	Let $\mathcal{B}$ be a set of $d$-dimensional boxes, and let $\mathcal{B'}$ be a subset of $\mathcal{B}$. The volume of the region covered by $\mathcal{B'}$ equals that of $\mathcal{B}$ if and only if $\mathcal{B'}$ and $\mathcal{B}$ cover the exact same region.
\end{observation}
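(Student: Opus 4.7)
The forward implication (equal regions $\Rightarrow$ equal volumes) is immediate, so the real content is the converse. My plan is to argue at the level of the \emph{arrangement} induced by $\mathcal{B}$, i.e., the canonical grid obtained by extending all facets of every box in $\mathcal{B}$ to full hyperplanes. This is the same grid that is used in the coverage discretization $\mathcal{D}(\mathcal{B})$ construction in \Cref{sec:background}, so the setup is already familiar.

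First I would partition $\mathbb{R}^d$ into the (finitely many) open full-dimensional cells of this arrangement plus their lower-dimensional boundaries. The key elementary fact is that each full-dimensional cell $c$ does not cross any facet of any box of $\mathcal{B}$, and so for every $b\in\mathcal{B}$ one has either $c\subseteq b$ or $c\cap b=\emptyset$. Consequently, for any $\mathcal{S}\subseteq\mathcal{B}$, a full-dimensional cell $c$ is contained in the region $R(\mathcal{S})=\bigcup_{b\in\mathcal{S}} b$ if and only if some box of $\mathcal{S}$ contains $c$ entirely. Since the lower-dimensional skeleton of the arrangement has Lebesgue measure zero, the volume of $R(\mathcal{S})$ equals the sum of the volumes of the full-dimensional cells contained in $R(\mathcal{S})$.

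Now assume the two volumes coincide. Since $\mathcal{B}'\subseteq\mathcal{B}$, we trivially have $R(\mathcal{B}')\subseteq R(\mathcal{B})$, and hence the set of full-dimensional cells contained in $R(\mathcal{B}')$ is a subset of the set of full-dimensional cells contained in $R(\mathcal{B})$. If this containment were strict, some full-dimensional cell would contribute a strictly positive volume to $R(\mathcal{B})$ but not to $R(\mathcal{B}')$, contradicting the equality of volumes. Hence exactly the same full-dimensional cells appear on both sides. Taking closures (the boxes, and therefore $R(\mathcal{B})$ and $R(\mathcal{B}')$, are closed) gives $R(\mathcal{B}')=R(\mathcal{B})$ as sets.

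The only step that requires a little care is handling the measure-zero boundary between cells; this is where I would be explicit that the "either $c\subseteq b$ or $c\cap b=\emptyset$" dichotomy is really being applied to the open cells, and that closing up at the end recovers the whole region because every box in $\mathcal{B}$ is closed. Beyond that, the argument is purely combinatorial over the arrangement and does not need any further machinery.
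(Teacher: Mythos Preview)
Your argument is correct. Note, however, that the paper states this as an \emph{Observation} and gives no proof at all; it is treated as self-evident. So there is no ``paper's proof'' to compare against, only the implicit expectation that the reader sees why it is true.

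Your route via the arrangement of $\mathcal{B}$ is perfectly valid and has the virtue of reusing the grid from the coverage discretization in \Cref{sec:background}. A slightly more direct justification, closer in spirit to why the paper leaves it unproved, avoids the arrangement entirely: since $\mathcal{B}'\subseteq\mathcal{B}$ we have $R(\mathcal{B}')\subseteq R(\mathcal{B})$; if the containment were strict, some box $b\in\mathcal{B}$ would satisfy $b\not\subseteq R(\mathcal{B}')$, and because $R(\mathcal{B}')$ is closed and $b=\overline{\mathrm{int}(b)}$, the open set $\mathrm{int}(b)\setminus R(\mathcal{B}')$ would be nonempty and hence of positive volume, contradicting the equality of volumes. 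Both arguments are elementary; yours is a bit more structural, the alternative a bit shorter.
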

%
Let $\mathtt{OPT}$ denote the size of a minimum coverage kernel of $\mathcal{B}$, and let $N$ denote the size of $\mathcal{D(B)}$ ($N \in \bigo{n^{d}}$). The greedy algorithm of \textcite{Lovasz75}, when run over the sets $\mathcal{B}$ and $\mathcal{D(B)}$ works in $\bigo{\mathtt{OPT} \log N}$ steps; and at each stage a box is added to the solution. The size of the output is within $\bigo{\mathtt{OPT} \log N} \subseteq \bigo{\mathtt{OPT} \log n}$. This algorithm can be modified to achieve the following running time, while achieving the same approximation ratio:

\begin{theorem}\label{theo:approx_logn}
	Let $\mathcal{B}$ be a set of $n$ boxes in $\mathbb{R}^{d}$ with a minimum coverage kernel of size $\mathtt{OPT}$. Then, a \textsc{Coverage Kernel} of $\mathcal{B}$ of size within $\bigo{\mathtt{OPT} \log n}$ can be computed in time within $\bigo{\mathtt{OPT} \cdot n^{\frac{d}{2} + 1}\log^2 n}$.
\end{theorem}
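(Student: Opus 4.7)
The plan is to implement the classical greedy set-cover algorithm of \textcite{Lovasz75} on the (implicit) instance $(\mathcal{B},\mathcal{D(B)})$, using the weight index from \Cref{theo:windex_gcase} to simulate operations (i)--(iii) without ever materialising $\mathcal{D(B)}$. First I would build the weight index for $\mathcal{D(B)}$, which by \Cref{theo:windex_gcase} costs time within $\bigo{n^{(d+1)/2}\log n}$ and assigns weight one to every point of $\mathcal{D(B)}$. With the index in hand, the three operations are immediate: the number of yet-uncovered points inside a box $b$ is obtained by a single weight query, deleting the points covered by a chosen box $b$ is achieved by multiplying the weights inside $b$ by $\alpha=0$, and the termination test reduces to checking whether the $\omega$ value stored at the root of the partition tree equals zero, which takes constant time.

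I would then bound the outer greedy loop. At each stage, the algorithm queries the weight inside each of the $n$ candidate boxes, picks the heaviest one, and zeroes its weights; this costs $n$ queries plus one update, for a total of $\bigo{n^{(d+1)/2}\log n}$ per stage. The standard $H_N$-analysis of \textcite{Lovasz75} guarantees that the total weight drops to zero after $\bigo{\mathtt{OPT}\log N}$ stages, where $N=|\mathcal{D(B)}|\in\bigo{n^d}$; for constant $d$ this is $\bigo{\mathtt{OPT}\log n}$ stages, which also bounds the size of the output. Multiplying by the per-stage cost (the initialisation is dominated as soon as $\mathtt{OPT}\ge 1$) yields a total running time within $\bigo{\mathtt{OPT}\cdot n^{(d+1)/2}\log^2 n}\subseteq \bigo{\mathtt{OPT}\cdot n^{d/2+1}\log^2 n}$, as claimed.

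The only non-routine point is to argue that the weight index faithfully tracks the \emph{number} of still-uncovered points, so that \citeauthor{Lovasz75}'s analysis transfers without loss. By the weight invariant of \Cref{theo:windex_gcase}, every point of $\mathcal{D(B)}$ starts at weight one and is only ever multiplied by zero when a box covering it is selected; hence, at any moment, the weight of a candidate box $b$ is exactly the number of currently uncovered points of $\mathcal{D(B)}$ inside $b$, which is precisely the quantity the greedy rule needs. The observation stated just before the theorem (volumes agree if and only if regions agree) then validates termination: once the index reports total weight zero, the selected subset covers the same discretisation as $\mathcal{B}$, and by construction of $\mathcal{D(B)}$ therefore the same region, so it is indeed a coverage kernel.
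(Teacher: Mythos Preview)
Your proposal is correct and follows essentially the same approach as the paper: run \citeauthor{Lovasz75}'s greedy algorithm on the implicit instance $(\mathcal{B},\mathcal{D(B)})$, using the weight index of \Cref{theo:windex_gcase} to answer ``how many uncovered points are in $b$'' and to zero out the weights of a chosen box, arriving at the same per-stage cost $\bigo{n^{(d+1)/2}\log n}$ and the same $\bigo{\mathtt{OPT}\log n}$ bound on the number of stages. The only minor divergence is the termination test: the paper recomputes the volume of the partial solution and compares it to the volume of $\mathcal{B}$ via the Observation, whereas you simply check whether the root's $\omega$ value has dropped to zero---your test is equally valid (indeed slightly cleaner), so your closing invocation of the volume observation is unnecessary, the property ``$\mathcal{C}$ covers $\mathcal{D(B)}$ iff $\mathcal{C}$ is a coverage kernel'' from \Cref{sec:background} already suffices.
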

\begin{proof}
	We initialize a weight index as in \Cref{theo:windex_gcase}, which can be done in time $\bigo{n^\frac{d+1}{2}}$, and compute the volume of the region covered by $\mathcal{B}$, which can be done in time within $\bigo{n^{d/2}}$~\cite{Chan2013}. Let $C$ be an empty set.
	At each stage of the algorithm, for every box $b \in \mathcal{B}\setminus C$ we compute the total weight of the points inside $b$ (which can be done in time within $n^{\frac{d-1}{2}}\log n$ using the weight index). 
	We add to $C$ the box with the highest total weight, and update the weights of all the points within this box to zero (by multiplying their weights by $\alpha=0$) in time within $n^{\frac{d-1}{2}}\log n$. 
	If the volume of the region covered by $C$ (which can be computed in $\bigo{n^{d/2}}$-time~\cite{Chan2013}) is the same as that of $\mathcal{B}$, then we stop and return $C$ as the approximated solution. 
	The total running time of each stage is within $\bigo{n^{\frac{d+1}{2}}\log n}$. This, and the fact that the number of stages is within $\bigo{\mathtt{OPT} \log n}$ yield the result of the theorem.
	\qed	
\end{proof}

Now, we show how to improve \citeauthor{BronnimannG95}'s $\bigo{\log \mathtt{OPT}}$ approximation algorithm~\cite{BronnimannG95} via a weight index. First, we describe their main idea.
Let $w : \mathcal{D(B) \rightarrow \mathbb{R}}$ be a weight function for the points of $\mathcal{D(B)}$, and for a subset $\mathcal{P} \subseteq \mathcal{D(B)}$ let $w(P)$ denote the total weight of the points in $\mathcal{P}$.
A point $p$ is said to be $\varepsilon$-heavy, for a value $\varepsilon \in (0,1]$, if $w(p) \ge \varepsilon w(\mathcal{D(B)})$, and $\varepsilon$-light otherwise.
A subset $\mathcal{B}' \subseteq \mathcal{B}$ is said to be an $\varepsilon$-net with respect to $w$ if for every $\varepsilon$-heavy point $p \in \mathcal{D(B)}$ there is a box in $\mathcal{B}' $ which contains  $p$.
Let $\mathtt{OPT}$ denote the size of a minimum coverage kernel of $\mathcal{B}$, and let $k$ be an integer such that $k/2 \le \mathtt{OPT} < k$. 
The algorithm initializes the weight of each point in $\mathcal{D(B)}$ to 1, and repeats the following \emph{weight-doubling step} until every range is $\frac{1}{2k}$-heavy: find a $\frac{1}{2k}$-light point $p$ and double the weights of all the points within every box $b \in \mathcal{B}$. 
When this process stops, it returns a $\frac{1}{2k}$-net $C$ with respect to the final weights as the approximated solution.

Since each point in $\mathcal{D(B)}$ is $\frac{1}{2k}$-heavy, $C$ covers all the points of $\mathcal{D(B)}$.
Hence, if a $\frac{1}{2k}$-net of size $\bigo{kg(k)}$ can be computed efficiently, this algorithm computes a solution of size $\bigo{kg(k)}$. Besides, \textcite{BronnimannG95} showed that for a given $k$, if more than $\mu_k = 4k \log (n/k)$ weight-doubling steps are performed, then $\mathtt{OPT} > 2k$.
This allows to guess the correct $k$ via exponential search, and to bound the maximum weight of any point by $n^4/k^3$ (which allows to represent the weights with $\bigo{\log n}$ bits).
See \citeauthor{BronnimannG95}'s article~\cite{BronnimannG95} for the complete details of their approach.

We simulate the operations over the weights of $\mathcal{D(B)}$ again using a weight index, this time with a minor variation to that of \Cref{theo:windex_gcase}: in every node of the space partition tree, besides the $\omega,\mu$ values, we  also store the minimum weight of the points within the cell corresponding to the node.
During the initialization and update operations of the weight index this value can be maintained as follows: for a node $v$ with children $l,r$, the minimum weight $min_\omega(v)$ of a point in the cell of $v$ can be computed as $min_\omega(v) = \omega(v) \cdot\min\{min_\omega(l), min_\omega(r)\}$.
This value allows to efficiently detect whether there are  $\frac{1}{2k}$-light points, and to find one in the case of existence by tracing down, in the partition tree,  the path from which that value comes.

To compute a $\frac{1}{2k}$-net, we choose a sample of $\mathcal{B}$ by performing at least $(16k \log 16k)$ random independent draws from $\mathcal{B}$. We then check whether it is effectively a $\frac{1}{2k}$-net, and if not, we repeat the process, up to a maximum of $\bigo{\log n}$ times. \textcite{HausslerW87} showed that such a sample is a $\frac{1}{2k}$-net with probability at least $1/2$. Thus, the expected number of samples needed to obtain a $\frac{1}{2k}$-net is constant, and since we repeat the process up to $\bigo{\log n}$ times, the probability of effectively finding one is at least $1 - \frac{1}{n^{\Omega(1)}}$.
We analyze the running time of this approach in the following theorem.

\begin{theorem}\label{theo:approx_randomized}
	Let $\mathcal{B}$ be a set of $n$ boxes in $\mathbb{R}^{d}$ with a minimum coverage kernel of size $\mathtt{OPT}$.
	A coverage kernel of $\mathcal{B}$ of size within $\bigo{\mathtt{OPT} \log \mathtt{OPT}}$ can be computed in $\bigo{\mathtt{OPT} n^{\frac{d+1}{2}} \log^2 n}$-expected time, with probability at least $1- \frac{1}{n^{\Omega(1)}}$.
\end{theorem}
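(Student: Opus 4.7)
The plan is to implement Brönnimann-Goodrich's $\varepsilon$-net based algorithm for \textsc{Set Cover}~\cite{BronnimannG95} using the weight index of \Cref{theo:windex_gcase} to avoid ever materializing $\mathcal{D(B)}$ explicitly, together with an outer exponential search on the parameter $k$. The exponential search tries $k = 1, 2, 4, \ldots$, stopping at the first value for which the inner procedure succeeds; by Brönnimann-Goodrich's analysis this happens while $k/2 \le \mathtt{OPT} < k$, and the total work across the $\bigo{\log \mathtt{OPT}}$ guesses is asymptotically dominated by the last one via a geometric summation.

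For a fixed guess $k$, I would initialize an instance of the weight index of \Cref{theo:windex_gcase} in time within $\bigo{n^{(d+1)/2}\log n}$, augmented with the value $\min_\omega$ at every node as discussed just before the statement. The minimum-weight invariant $\min_\omega(v) = \mu(v)\cdot\min\{\min_\omega(l),\min_\omega(r)\}$ at every internal node of the partition tree, together with an analogous product-of-minima rule inside each slab index at a leaf, is maintained during both initialization and each update at no extra asymptotic cost. This lets me decide in constant time whether a $\frac{1}{2k}$-light point exists (by comparing $\min_\omega$ at the root against $\frac{1}{2k}\cdot\omega$ at the root), and locate one by tracing the minimum down the partition tree and then through the one-dimensional indexes at a leaf in time within $\bigo{\log n}$. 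Doubling the weights in every box that contains such a point $p$ requires at most $n$ calls to the update operation of \Cref{theo:windex_gcase}, each costing within $\bigo{n^{(d-1)/2}\log n}$, for a total of $\bigo{n^{(d+1)/2}\log n}$ per weight-doubling step.

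By Brönnimann-Goodrich's bound, the inner loop for the correct value of $k$ performs at most $\mu_k = 4k\log(n/k)$ doubling steps; exceeding this bound for a smaller $k$ certifies $\mathtt{OPT} > 2k$ and triggers the next guess. Summed over the exponential search, the doubling phases therefore cost within $\bigo{\mathtt{OPT}\cdot n^{(d+1)/2}\log^2 n}$. When the inner loop terminates for the final $k$, every point of $\mathcal{D(B)}$ is $\frac{1}{2k}$-heavy, so any $\frac{1}{2k}$-net of $\mathcal{B}$ with respect to the current weights is automatically a coverage kernel of size within $\bigo{k\log k} = \bigo{\mathtt{OPT}\log \mathtt{OPT}}$. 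I would draw a sample of $\Theta(k\log k)$ boxes and verify it using Chan's union-volume algorithm~\cite{Chan2013} in $\bigo{n^{d/2}}$ time; by Haussler-Welzl~\cite{HausslerW87} such a sample is a $\frac{1}{2k}$-net with probability at least $1/2$, so repeating up to $\Theta(\log n)$ independent trials boosts the success probability to at least $1 - n^{-\Omega(1)}$ while adding only $\bigo{n^{d/2}\log n}$ to the running time, which is absorbed into the final bound.

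The main technical hurdle is arranging the $\min_\omega$ augmentation so that it can be read in constant time at the root and traced down to an actual $\frac{1}{2k}$-light point at asymptotically the cost of a single query operation, uniformly across the internal nodes of the Overmars-Yap tree and across the $d$ one-dimensional trees that implement the slab index at each leaf. Once this is in place the remaining obligations (the approximation ratio, the success probability, and the bound on the maximum weight, which keeps arithmetic $\bigo{1}$) are inherited from Brönnimann-Goodrich and Haussler-Welzl, and the running-time bound follows by multiplying (number of weight-doubling steps) by (boxes to update per step) by (cost per update), then summing the resulting geometric series over the exponential search on $k$.
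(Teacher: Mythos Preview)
Your proposal is correct and follows essentially the same approach as the paper: the exponential search on $k$, the $\min_\omega$-augmented weight index of \Cref{theo:windex_gcase} for constant-time detection and tracing of light points, the $n$ update calls per doubling step, the $4k\log(n/k)$ bound on doubling steps, and the $\Theta(\log n)$ repeated Haussler--Welzl samples are all exactly what the paper does. You are slightly more explicit than the paper on two points (carrying the $\min_\omega$ invariant through the per-leaf slab indexes, and verifying the candidate net via Chan's union-volume computation), and your recurrence $\min_\omega(v)=\mu(v)\cdot\min\{\min_\omega(l),\min_\omega(r)\}$ fixes what appears to be a typo in the paper's pre-theorem discussion, but none of this constitutes a different route.
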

\begin{proof}
	The algorithm performs several stages guessing the value of $k$. Within each stage we initialize a weight index in time within $\bigo{n^\frac{d+1}{2}}$. 
	Finding whether there is a $\frac{1}{2k}$-light  point can be done in constant time: the root of the partition tree stores both $w(\mathcal{D(B)})$ and the minimum weight of any point in the $\omega$ and $min_\omega$ values, respectively. 
	For every light point, the weight-doubling steps consume time within $\bigolr{n \times \left(n^\frac{d-1}{2}\log n\right)} \subseteq \bigo{n^{\frac{d+1}{2}}\log n}$ (by \Cref{theo:windex_gcase}). Since at each stage at most $4k \log (n/k)$ weight-doubling steps are performed, the total running time of each stage is within $\bigo{k n^{\frac{d+1}{2}}\log n \log \frac{n}{k}} \subseteq \bigo{k n^{\frac{d+1}{2}} \log^2 n}$.
	Given that $k$ increases geometrically while guessing its right value,  and since the running time of each stage is a polynomial function, the sum of the running times of all the stages is asymptotically dominated by that of the last stage, for which we have that $k \le \mathtt{OPT} \le 2k$. 
	Thus the result of the theorem follows.
	\qed	
\end{proof}

Compared to the algorithm of \Cref{theo:approx_logn}, this last approach obtains a better approximation factor on instances with small \textsc{Coverage Kernels} ($\bigo{\log n}$ vs. $\bigo{\log \mathtt{OPT}}$), but the improvement comes with a sacrifice, not only in the running time, but in the probability of finding such a good approximation.
In  two and three dimensions, weight indexes might also help to obtain practical $\bigo{ \log \log \mathtt{OPT}}$ approximation algorithms for the \textsc{Minimum Coverage Problem}.
We discuss this, and other future directions of research in the next section.

\section{Discussion}\label{sec:cover_discussion}
Whether it is possible to close the gap between the factors of approximation of \textsc{Box Cover} and \textsc{Orthogonal Polygon Covering} has been a long standing open question~\cite{KumarR03}.
The \textsc{Minimum Coverage Kernel} problem, intermediate between those two, has the potential of yielding answers in that direction, and has natural applications of its own~\cite{DalyLT16,LakshmananNWZJ02,PuM05}.
Trying to understand the differences in hardness between these problems we studied distinct restricted settings.
We show that while \textsc{Minimum Coverage Kernel} remains \textsf{NP}-hard under severely restricted settings, the  same can be said for the \textsc{Box Cover} problem under even more extreme settings;
and show that while the \textsc{Box Cover} and \textsc{Minimum Coverage Kernel} can be approximated by at least the same factors, the running time of obtaining some of those approximations can be significantly improved for the \textsc{Minimum Coverage Kernel} problem.

Another approach to understand what makes a problem hard is Parameterized Complexity~\cite{DowneyF99}, where the hardness of a problem is analyzed with respect to multiple parameters of the input, with the hope of finding measures gradually separating ``easy'' instances form the ``hard'' ones.
The hardness results described in \Cref{sec:covkernels_hardness} show that for the \textsc{Minimum Coverage Kernel} and \textsc{Box Cover} problems, the vertex-degree and clique-number of the underlaying graph are not good candidates of such kind of measures, opposed to what happens for other related problems~\cite{AlekseevBKL07}.

In two and three dimensions, the \textsc{Box Cover} problem can be approximated up to $\bigo{ \log \log \mathtt{OPT}}$~\cite{AronovES10}.
We do not know whether the running time of this algorithm can be also improved for the case of \textsc{Minimum Coverage Kernel} via a weight index.
We omit this analysis since the approach described in \Cref{sec:covkernels_approximation} is relevant when the dimension of the boxes is high (while still constant), as in distinct applications~\cite{DalyLT16,LakshmananNWZJ02,PuM05} of the \textsc{Minimum Coverage Kernel} problem.

\medskip 

%

\printbibliography

\begin{LONG}
	\input{appendix}
\end{LONG}

\end{document}